\newif\ifExtended
\renewcommand\footnotetextcopyrightpermission[1]{} 
\definecolor{lcol}{rgb}{0,0,0.5}
\newif\ifFinal
\newcommand{\myparagraph}[1]{\paragraph{#1}}
\newcommand{\SecC}{\textsc{SecC}\xspace}
\newcommand{\SecCSL}{\textsc{SecCSL}\xspace}
\newcommand{\Tool}{\textsc{Verdeca}\xspace}
\newcommand{\Veronica}{\textsc{Veronica}\xspace}
\newcommand{\FIXME}[1]{\colorbox{yellow}{\textbf{\emph{FIXME:}}} \textbf{\emph{#1}}}
\newcommand{\addition}[1]{{#1}}
\definecolor{dscolor}{rgb}{0.1,0.1,0.9}
\newcommand{\Triple}[3]{\{#1\}\ #2\ \{#3\}}
\newcommand{\prove}[4]{\vdash_{#1} \Triple{#2}{#3}{#4}}
\newcommand{\audit}[5]{\vdash_{#1} \Triple{#2}{#3}{#4} \triangleright #5}
\newcommand{\List}[1]{\langle#1\rangle} 
\newcommand{\Concat}[2]{#1 \cdot #2} 
\newcommand{\Nil}{\List{}}
\newcommand{\Run}[4]{(\mathbf{run}\ #1, #2, #3, #4)}
\newcommand{\Stop}[3]{(\mathbf{stop}\ #1, #2, #3)}
\newcommand{\Abort}{(\mathbf{abort})}
\newcommand{\Step}[3]{#1 \stackrel{#2}{\longrightarrow} #3}
\newcommand{\tranStar}[1]{\mathrel{\overset{{#1}}{\protect{\raisebox{0pt}[.6ex][0pt]{$\longrightarrow$}}}\raisebox{.6ex}[0pt][0pt]{\small$*$}}}
\newcommand{\Steps}[3]{#1 \tranStar{#2} #3}
\newcommand{\Out}[2]{\mathbf{Out}\ #1\ #2}
\newcommand{\Trace}[1]{\mathbf{Trace}\ #1}
\newcommand{\Fst}{\mathbf{L}}
\newcommand{\Snd}{\mathbf{R}}
\newcommand{\Tick}{\tau}
\newcommand{\Load}[1]{\mathbf{Load}\ #1}
\newcommand{\Store}[1]{\mathbf{Store}\ #1}
\newcommand{\Assumption}[1]{\mathbf{Assm}\ #1}
\newcommand{\Emp}{\mathbf{emp}}
\newcommand{\pto}[2]{#1\ \mapsto\ #2}
\newcommand{\Exists}[2]{\exists #1.\ #2}
\newcommand{\haslabel}{::}
\definecolor{seccol}{rgb}{0.9,0.9,0.9}
\newcommand{\seccheck}[1]{\colorbox{seccol}{$#1$}}
\newcommand{\defeq}{\hat{=}}
\newcommand{\union}{\cup}
\newcommand{\sem}[2]{\llbracket #1 \rrbracket_{#2}}
\newcommand{\emptyheap}{\varnothing}
\newcommand{\tr}{\mathit{tr}} 
\newcommand{\ep}{{e_p}}
\newcommand{\ev}{{e_v}}
\newcommand{\el}{{{\ell'}}} 
\newcommand{\ITE}[3]{\textbf{if}\ #1\ \textbf{then}\ #2\ \textbf{else}\ #3}
\newcommand{\WHILE}[2]{\textbf{while}\ #1\ \textbf{do}\ #2}
\newcommand{\ASSUME}[1]{\textbf{assume}\ #1}
\newcommand{\ASSERT}[1]{\textbf{assert}\ #1}
\newcommand{\OUTPUT}[2]{\textbf{output}\ #1\ #2}
\newcommand{\LOCK}[1]{\textbf{lock}\ #1}
\newcommand{\UNLOCK}[1]{\textbf{unlock}\ #1}
\newcommand{\LOAD}[2]{#1 := [#2]}
\newcommand{\STORE}[2]{[#1] := #2}
\newcommand{\ASSIGN}[2]{#1 := #2}
\newcommand{\SKIP}{\textbf{skip}}
\newcommand{\TRACE}[1]{\textbf{trace}\ #1}
\newcommand{\D}{\mathcal{D}} 
\newcommand{\Dpre}{\varphi_\D}
\newcommand{\Dpost}{\rho_\D}
\newcommand{\policy}[2]{#1 \leadsto #2}
\newcommand{\History}{\mathcal{H}} 
\newcommand{\Reserve}{\mathit{Reserve}}
\newcommand{\st}{\mathit{st}}
\newcommand{\met}{\mathit{met}}
\newcommand{\resmet}{\mathsf{resmet}}
\newcommand{\visible}[1]{\mathsf{visible}_\ell(#1)}
\newcommand{\countbudget}[1]{\mathsf{count\_budget}(#1)} 
\newcommand{\length}{\mathsf{length}} 
\newcommand{\sumfunc}{\mathsf{sum}} 
\newcommand{\fv}[1]{\mathsf{fv}(#1)} 
\newcommand{\inv}[1]{\mathsf{inv}(#1)} 
\newcommand{\invs}[1]{\mathsf{invs}(#1)} 
\newcommand{\dom}[1]{\mathsf{dom}(#1)} 
\newcommand{\uncertainty}[7]{\mathsf{uncertainty}_{#1}(#2,#3,#4,#5,#6,#7)}
\newcommand{\assumedrelease}[7]{\mathsf{assumed\text{-}release}_{#1}(#2,#3,#4,#5,#6,#7)}
\newcommand{\policyrelease}[8]{\mathsf{policy\text{-}release}_{#1}(#2,#3,#4,#5,#6,#7,#8)}
\newcommand{\secure}[3]{\mathsf{secure}_{#1}^{#2}(#3)}
\newcommand{\schedcteq}[3]{#2 \approx_{#1} #3}
\newcommand{\cut}[1]{_{|#1}}
\newcommand{\high}{\textbf{\texttt high}\xspace}
\newcommand{\low}{\textbf{\texttt low}\xspace}
\newcommand{\requires}{\textbf{\texttt requires}\xspace}
\newcommand{\ensures}{\textbf{\texttt ensures}\xspace}
\newcommand{\trueconst}{\textbf{\texttt true}\xspace}
\newcommand{\falseconst}{\textbf{\texttt false}\xspace}
\newcommand{\result}{\textbf{\texttt result}\xspace}
\newcommand{\ann}[1]{\texttt{\_(#1)}\xspace}
\newcommand{\avggetinput}{\texttt{avg\_get\_input()}\xspace}
\newcommand{\printaverage}{\texttt{print\_average()}\xspace}
\newcommand{\avglock}{\texttt{avg\_lock()}\xspace}
\newcommand{\avgunlock}{\texttt{avg\_unlock()}\xspace}
\newcommand{\avgsumthread}{\texttt{avg\_sum\_thread()}\xspace}
\newcommand{\avgdeclassthread}{\texttt{avg\_declass\_thread()}\xspace}
\newcommand{\getrealloc}{\texttt{get\_real\_loc()}\xspace}
\newcommand{\printloc}{\texttt{print\_loc()}\xspace}
\newcommand{\varp}{\texttt{p}\xspace}
\newcommand{\vars}{\texttt{s}\xspace}
\newcommand{\budget}{\texttt{budget}\xspace}
\newcommand{\addnoise}{\texttt{add\_noise()}\xspace}
\newcommand{\logreplenish}{\texttt{log\_replenish()}\xspace}
\newcommand{\noiselat}{\mathit{nlat}}
\newcommand{\noiselon}{\mathit{nlon}}
\newcommand{\Consumed}[2]{\mathbf{Consumed}\,(#1,#2)}
\newcommand{\Replenished}{\mathbf{Replenished}}
\newcommand{\vard}{\texttt{d}\xspace}
\newcommand{\logbid}{\texttt{log\_bid()}\xspace}
\newcommand{\logclosed}{\texttt{log\_closed()}\xspace}
\newcommand{\printresult}{\texttt{print\_result()}\xspace}
\newcommand{\bidid}{\mathit{id}}
\newcommand{\bidqt}{\mathit{qt}}
\newcommand{\ismax}{\mathsf{ismax}}
\newcommand{\contains}{\mathsf{contains}}
\newcommand{\Running}[2]{\mathbf{Run}\,(#1,#2)}
\newcommand{\Finished}{\mathbf{Fin}}
\newif\ifdelta
\newcommand\changed[2]{\textcolor{red}{#1}\textcolor{blue}{#2}}
\newenvironment{added}{\par\color{blue}}{\par}
\newenvironment{deleted}{\par\color{red}}{\par}
\newcommand\changed[2]{#2}
\newenvironment{added}{\relax}{\relax}
\newenvironment{deleted}{\relax}{\relax}
\begin{document}
%


\title{Assume but Verify: Deductive Verification of Leaked Information in Concurrent Applications\ifExtended\ (Extended Version)\fi}

%
%

\ifFinal
\author{Toby Murray}
\email{toby.murray@unimelb.edu.au}
\orcid{0000-0002-8271-0289}
\affiliation{%
  \institution{University of Melbourne}
  \country{Australia}
}

\author{Mukesh Tiwari}
\authornote{This work was carried out while the author was at the University of Melbourne.}
\email{mt883@cam.ac.uk}
\orcid{0000-0001-5373-9659}
\affiliation{%
  \institution{University of Cambridge}
  \country{United Kingdom}
}

\author{Gidon Ernst}
\email{gidon.ernst@lmu.de}
\orcid{0000-0002-3289-5764}
\affiliation{%
  \institution{LMU Munich}
  \country{Germany}
}

\author{David A. Naumann}
\email{naumann@cs.stevens.edu}
\orcid{0000-0002-7634-6150}
\affiliation{%
  \institution{Stevens Institute of Technology}
  \country{USA}
}

\renewcommand{\shortauthors}{Murray et al.}
\else
\ifdelta
\author{This is the minor revision ``diff''. Added content appears in \changed{}{blue} while deleted content appears in \changed{red}{}.}
\else
\author{Anonymised}
\fi
\fi 
\begin{abstract}
  We consider the problem of specifying and proving the security of non-trivial, concurrent programs that
  intentionally leak information.
We present a method that decomposes the problem into (a)~proving that the program
only leaks information it has declassified via \ASSUME annotations already
widely used in deductive program verification;
and (b)~auditing the declassifications against a declarative security
policy.
We show how condition~(a) can be enforced by an extension of the existing program logic \SecCSL,
and how~(b) can be checked by proving a set of simple entailments.
Part of the challenge is to define respective semantic
soundness criteria and
to formally connect these to the logic rules and policy audit.
We support our methodology in an auto-active program verifier, which we apply
to verify the implementations of various case study programs against
a range of declassification policies.
\end{abstract}

\maketitle              


\section{Introduction}
 
Methods for proving that programs are not just functionally correct, but also
maintain confidential information \emph{securely}, have been applied to
realistic software like operating system
kernels~\citep{Murray_MBGBSLGK_13,Costanzo_SG_16}, encompassing features like
concurrency~\citep{Murray_SE_18,Karbyshev+:POST18} and
pointers~\citep{Costanzo_Shao_14,Frumin_KB_21}.  These methods have also been
embodied in auto-active program verification tools like
SecC~\citep{Ernst_Murray_19} and a variant of Viper~\citep{Muller2016,Eilers2018}.
In the \emph{auto-active} verification paradigm, popularized by tools like Dafny~\citep{leino2010usable,leino2010dafny}, VeriFast~\citep{Jacobs11}, and Why3~\citep{filliatre2013why3},
programs are verified semi-automatically via annotations added to their source code,
supporting a high degree practical usability (made possible by advances in automated backend provers like SMT solvers).

The standard criterion for security is
noninterference~\citep{Goguen_Meseguer_82} which guarantees absence of
information leaks.  However, as has been noted
repeatedly~\citep{Sabelfeld_Sands_09,Broberg_Sands_10,vanDelft_HS_15,Broberg_vDS_15,Askarov_Chong_12,Zhang_11,Eggert_vanderMeyden_17,Schoepe_MS_20},
practical programs that handle sensitive information almost always intentionally
reveal some part of that information.  Such an act of \emph{declassification} is
deemed secure if it adheres to a given high-level policy.  As an example
(\cref{sec:overview}), we may release statistics like the average of numbers in
a data set if this set is sufficiently large and/or homogeneous enough such that
the leaked information about the individual data points is acceptably low.  One
approach to reasoning about declassification relies on relational \emph{assume
statements}~\citep{banerjee2008expressive,ChudnovKN14}, and this is used in tools
like SecC~\citep{Ernst_Murray_19} and a variant of Viper~\citep{Muller2016}
---but these tools make no formal connection with high level policy.

A high level policy designates security levels for input and output channels,
with a basic interpretation that observations at a given level should reveal no
information about inputs except those at or below the given level -- and except
for designated intentional releases. Such a declarative policy designates \emph{what}
information may be released to observers at lower level, and \emph{when}, i.e., under what
conditions.  The conditions refer to observable data values and events, like the
size of the data set in our example.  The precise meaning of a declarative
policy can be formalized in terms of observer
\emph{knowledge}~\citep{Askarov_Sabelfeld_07,banerjee2008expressive,BalliuDG11,Askarov_Chong_12,Broberg_vDS_15}.

In a nutshell, this paper contributes an auto-active verification tool that
provably enforces declarative high level policies for concurrent C code, and its
evaluation through challenging case studies.  Security is defined in terms of
knowledge and proved using a relational logic.  Previously, this approach has
only been sketched~\citep[Section~VII, B]{chudnov2018assuming} for a much
simpler program semantics and with no implemented tool.

The \textbf{first contribution} of this paper is to formalize the security
property given by a high level policy, for a programming model with concurrent
threads and dynamically allocated mutable objects, with respect to a \changed{gold}{}
standard threat model.  That is, the property makes the strong guarantee of
\emph{constant-time} security~\citep{almeida2016verifying}, which \changed{not only
forbids secret-dependent branching but also loads and stores to memory addresses
that are secret-dependent.}{precludes secret-dependent branching and loads/stores for memory addresses
that are secret-dependent.}
Thus, even in the presence of variable latency
induced by instruction- and data-caches, a program's running time is independent
of secrets.
\changed{}{This is more restrictive than some security conditions in the literature,
as discussed in \cref{sec:related}, but the threat model is well suited to many context where C programs are used.}

Our formalization disentangles a
\emph{policy-agnostic}~\citep{yang2015preventing} security property from the
\emph{policy-specific} property associated with a high-level declarative policy.
This supports an important methodological point.  In a well designed program,
declassifications occur only at particular places in the code.  To verify a
program, we designate these places by assume statements.  
\changed{The policy-agnostic
property says that no releases occur except those corresponding to the
execution of assume statements.
The policy-specific property says the observer never learns anything except
at such execution steps
where what is released is allowed by the policy,
and only when the associated release condition holds.
}{
The policy-agnostic property says that no releases occur ---i.e., the observer never learns anything--- except at execution steps with assume statements.  
The policy-specific property says that those steps 
only release what is allowed by the policy,
and only when the associated release condition holds.
}

Our \textbf{second contribution} is a deductive proof system which supports
reasoning about assume and assert statements.  A high level policy specifies
conditions under which particular values may be released.  Because the policy is
program-independent, it expresses release conditions as predicates on the
program's I/O history (inspired
by~\citep{banerjee2008expressive,Schoepe_MS_20}).  To verify a program with
respect to a high level policy, each assumption should be justified by an
assertion of a condition in the high level policy that licenses the release
---we call this \emph{policy audit}.  The assertion makes use of a ghost
variable that records the history.  We prove that (a) the proof system is sound
wrt. the policy-agnostic property, and (b) if the program passes the
policy audit then it satisfies the policy-specific security property.
\changed{}{The proofs are mechanized in Isabelle/HOL.}


We choose to base the second contribution on the existing logic \SecCSL~\citep{Ernst_Murray_19},
as it comes not only with a tool implementation but also a mechanized soundness proof that \SecCSL satisfies strong non-interference (absent assume statements).
We extend this framework by uniformly capturing security-relevant semantic
actions of the program (assumptions, actions/outputs, memory access)
as the basis of the policy-agnostic security guarantee. \changed{}{In doing
  so, we inherit from \SecCSL its capabilities for proving security of concurrent
  programs with lock-based synchronization.}

The \textbf{third contribution} of this work is a practical demonstration of the approach.
We have implemented the approach in the auto-active verifier \Tool,
including support for declassification and policy audit.
\Tool treats a subset of C, and is targeted at verifying
concurrent shared-memory programs that use lock-based synchronization.
We leave verification of lock-free programs, exposed to weak-memory
effects~\citep{Yan_Murray_21}, to future work.

We carry out several challenging case studies:
The first, a location service for mobility traces~\citep{10.1007/978-3-319-08506-7_2}
leverages domain knowledge about privacy budgets
wrt. adding \emph{planar laplacian noise}~\citep{Geo_2013}
to achieve differential privacy~\citep{10.1007/11787006_1}
via a suitable policy; verifying that this budget is never exceeded.
The second case study, a sealed-bid auction server ensures
that no client learns anything about the current maximum bid until the auction closes.
Each client's TCP connection is serviced by a separate thread to ensure that no client can block the server's progress and thereby game the auction.
With a variant of this example we demonstrate furthermore
\emph{policy composition}.
The third case study is a verified constant-time implementation of the popular game Wordle,
where rules of the game induce an interesting value-dependent, multi-level
declassification policy
with each move of the player, i.e.,
revealing information about the characters and positions guessed correctly
to that player, but not to other concurrent players nor to the general public.
The final case study considers secure, constant-time, private learning, specifically
differentially private gradient descent to infer a simple linear model,
designed for deployment in a federated learning scenario~\citep{wei2020federated}. 

\Cref{sec:motivation} gives a high-level overview of the ideas.
The threat model is made explicit in \cref{sec:threat-model}.
The semantic and logical foundations are presented in \cref{sec:logic}.
The policy-agnostic and policy-specific guarantees are formalized in \cref{sec:guarantee} and \cref{sec:audit}, respectively.
Case-studies are presented in \cref{sec:case-studies},
and \cref{sec:related} compares to related work and concludes.
\ifExtended
\cref{sec:proofs} sketches \else The extended version of this paper~\citep{EXTENDED}
contains sketches of the \fi proofs for our main results,
which are mechanised in Isabelle/HOL.

\section{Motivation and Overview}%
\label{sec:overview}\label{sec:motivation}

Consider the program in \cref{fig:motivation}, inspired by the \emph{running
  average} example from \citet{Schoepe_MS_20}. It
represents the
state-of-the-art in terms of the size of concurrent programs that
have been verified for secure  declassification prior to \Tool; our
case studies in \cref{sec:examples} are an order of magnitude
larger.

Here, two threads 
cooperate to compute and declassify an aggregate statistic (in this
case a simple average) calculated over purportedly sensitive inputs.
This program's declarative \emph{security policy} allows the
average of the inputs received to be declassified so long as that average
has been calculated over at least 6 inputs.

One thread repeatedly waits for new inputs to arrive
and computes a running sum as well as the count of the inputs received so far;
the second thread reads these values from the shared state,
and prints out the average but only if at least 6 inputs have been counted
in the shared state, to honor the security policy.

\begin{figure}[t]
\begin{lstlisting}
struct avg_state { int count; int sum; };

struct avg_state * avg_lock();
void avg_unlock(struct avg_state *st);

int avg_get_input();
  _(ensures result :: high)
  _(requires %$\History(tr)$%)
  _(ensures  %$\History(\Concat{tr}{\texttt{\bfseries result}})$%)

void print_average(int value);
  _(requires value :: low)

void avg_sum_thread() {
  while(true) {
    struct avg_state * st = avg_lock();
    int i = avg_get_input();
    st->count += 1;
    st->sum += i;
    avg_unlock(st);
  }
}

void avg_declass_thread() {
  struct avg_state * st = avg_lock();
  if (st->count >= 6) {
    int avg = st->sum / st->count;
    _(assume avg :: low) %$\hfill (\dagger)$%
    print_average(avg);
  }
  avg_unlock(st);
}
\end{lstlisting}
\caption{Declassifying the average of at least 6 inputs.\label{fig:motivation}}
\end{figure}

This program makes use of four \emph{external library} functions (those in
\cref{fig:motivation} without an implementation). Such functions are part of
the application's trusted computing base (TCB) and are trusted to be correct and secure.
The external functions \avglock and \avgunlock
implement a mutex that is used to coordinate access to the shared state
(running input count and sum) between the two threads.
Function \avggetinput gets the next input of high sensitivity, i.e., it returns a secret value.
This is specified in terms of its postcondition inside the annotation
\ann{\ensures \result :: \high}, where $e \haslabel \ell$ denotes
that the expression~$e$ holds a value that conforms to classification by security label~$\ell$.
The other two annotations mentioning $\History$ track the input/output history of the program as a trace~$\tr$ of events that are relevant
to the declassification policy, here the trace is extended by the result of the function.
Function \printaverage prints out its argument to a public channel,
its contract similarly mentions a classification, this time,
the precondition inside \ann{\requires value :: \low} specifies
that any argument passed to this function should be of low sensitivity, i.e., public.

What does it mean for this program to be secure?
%
Clearly, the program does not satisfy
noninterference~\citep{Goguen_Meseguer_82}, which requires
that the attacker \emph{never} learns secret values:
since the values stored in \texttt{count} and \texttt{sum} have been computed
from the secret inputs and the attacker will in general learn something
from the call to \printaverage.
Indeed, proving the program secure with vanilla \SecCSL~\citep{Ernst_Murray_19} will not be possible,
therefore we \changed{extend the logic to include}{add} \ASSUME statements in \cref{sec:rules},
which make explicit the decision that the value of \texttt{avg}
can now be considered to be public right before the call to \printaverage.
\changed{}{\cref{fig:motivation} shows use of an assume statement to make
  explicit the declassification ($\dagger$).}

From the point of a verification engineer,
\ASSUME statements express that from this point onwards in the program's execution,
the attacker is assumed to know the declassified value~\citep{chudnov2018assuming},
as with \texttt{avg} in the example above.
And indeed, merely adding assume statements to the logic is easy,
and this is how they have been used in other approaches~\citep{Eilers2018}.
But what justifies such an assumption?
Just as with assuming the absence of hash collisions in cryptographic applications~\citep{DupressoirGJN14},
the act of \emph{declassification should be justified with respect to high level policy.}
The practical challenges in verifying the program
are fundamentally on a different level of abstraction
than the concerns related to formulating and validating declassification policies.
Therefore, we argue, security in the presence of declassification policies
intrinsically suggests separating two concerns:
(a)~the code leaks no information except as made explicit in assumptions,
and (b)~all assumptions are justified by the high level policy.
For both concerns individually we provide a security property
and a sound verification method.


\begin{quote}
\textbf{Goal: policy-agnostic security guarantee.} 
Information leaks in a \emph{verified} program
can always be traced back to a prior failed assumption.
\end{quote}
To make this intuition precise we consider the attacker's knowledge at a given
point in an execution (called the ``major run''~\citep{Beringer12}), in terms of
their \emph{uncertainty}~\citep{Askarov_Sabelfeld_07,chudnov2018assuming,Schoepe_MS_20} about the
initial secrets.  The uncertainty is the set of runs that are consistent with
what the attacker is able to observe about the major run.  A given ``minor run''
gets removed from the uncertainty at any step of the major run where the
attacker can observe something inconsistent with that minor run.  Our
formalization is based on the notion of a \emph{schedule}, a generic semantic
model that relates such observations to \changed{the execution of those program locations that make}{points in control flow annotated with explicit} assumptions about attacker knowledge.
In \cref{fig:traces} the
attacker is uncertain about the possible initial state~$s'_1$ because there
exists a minor run up to the state~$s'_j$ whose schedule (primed~$\sigma'$) is
observationally equivalent, written $\approx_\ell$, to the schedule of the major
run (unprimed~$\sigma$).  In the $j$-th step, an information leak occurs when
the outputs mismatch $v \neq v'$.  The policy-agnostic security guarantee ensures
that this mismatch can always be explained by an assumption~$\rho$, occurring at an
earlier $i$-th step, that is not satisfied at that point (cf. red backwards
arrow).  Dually, the attacker is not allowed to exclude runs from their
uncertainty that have no such assumption violation.  We formally prove this as
\cref{thm:guarantee} in \cref{sec:guarantee}.

Note that formulas $\rho$ like $e \haslabel \ell$ are interpreted \underline{r}elationally over pairs of states, specifically $e \haslabel \low$ means that~$e$ evaluates to the same value in both.
In the example from \cref{fig:motivation}, therefore,
the assumption $\rho \mathrel{\defeq} \texttt{avg} \haslabel \low$ \changed{}{(marked by ($\dagger$))} must have failed between some prior $s_i$ and $s'_i$,
such that the difference between $v$ and $v'$ is \emph{caused} by the different values of \texttt{avg}.

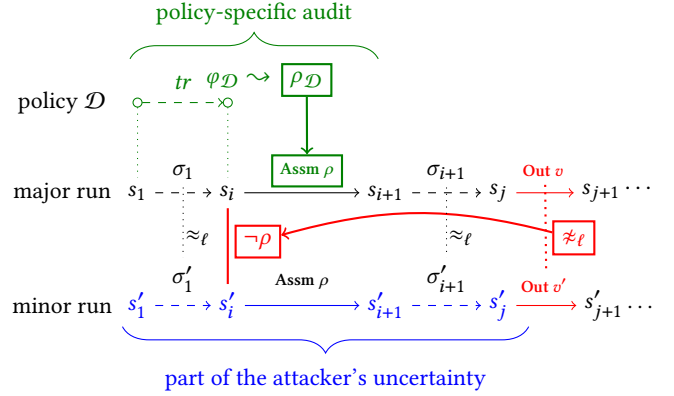
\begin{figure}[t]
    \centering
\begin{tikzpicture}
    \node (s0) at  (0.0,1.5) {major run};
    \node (s1) at  (1.0,1.5) {$s_1$};
    \node (s2) at  (2.2,1.5) {$s_i$};
    \node (s3) at  (4.3,1.5) {$s_{i+1}$};
    \node (s4) at  (5.8,1.5) {$s_j$};
    \node (s5) at  (7.4,1.5) {$s_{j+1} \cdots$};
    \path[draw,dashed,->] (s1) -- node[above=2pt] (sigma1) {$\sigma_1$}
                          (s2);
    \path[draw,->] (s2) -- node[above=2pt,black!50!green,xshift=2pt,draw,thick] (a1) {\scriptsize $\Assumption{\rho}$} (s3);
    \path[draw,dashed,->] (s3) -- node[above=2pt] (sigma2) {$\sigma_{i+1}$}
                          (s4);
    \path[draw,red,->] (s4) -- node[above=3pt,red] (so) {\scriptsize $\Out{v}$} (s5);

    \node (t0) at  (0.0,0) {minor run};

    \node (t5) at  (7.4,0) {$s'_{j+1} \ldots$};

    {\color{blue}
    \node (t1) at  (1.0,0) {$s'_1$};
    \node (t2) at  (2.2,0) {$s'_i$};
    \node (t3) at  (4.3,0) {$s'_{i+1}$};
    \node (t4) at  (5.8,0) {$s'_j$};
    \path[draw,dashed,->] (t1) -- node[above=2pt] (tigma1) {\normalcolor $\sigma'_1$}
                          (t2);
    \path[draw,->] (t2) -- node[above=3pt] {\scriptsize $\normalcolor \Assumption{\rho}$} (t3);
    \path[draw,dashed,->] (t3) -- node[above=2pt] (tigma2) {\normalcolor $\sigma'_{i+1}$}
                          (t4);
    }

    \path[draw,red,->] (t4) -- node[above=2pt,red] (to) {\scriptsize $\Out{v'}$} (t5);

    {\color{red}
    \node[draw,thick,red,rectangle] (rho) at (2.6,0.85) {$\lnot\rho$};
    \node[draw,thick] (neq) at (6.8,0.9) {$\not\approx_\ell$};
    \path[draw,thick] (s2) -- (t2);
    \path[draw,thick,dotted] (so) -- (to);
    \path[draw,thick,->] (neq) to[bend right=15] (rho);
    }

    \node at (1.8,0.9) {$\approx_\ell$};
    \node at (5.3,0.9) {$\approx_\ell$};
    \path[draw,thin,dotted] (sigma1) -- (tigma1);
    \path[draw,thin,dotted] (sigma2) -- (tigma2);

    \draw[decorate,blue,decoration={brace,amplitude=10pt,mirror},yshift=-8pt]
          (0.8,0) -- (6.2,0) node[black,midway,yshift=-0.7cm]
          { \color{blue} part of the attacker's uncertainty};

    \node (h0) at (0.0,2.7) {policy $\D$};

    {\color{black!50!green}
    \node[draw,circle,scale=0.4] (h1) at (1.0,2.7) {};
    \node[draw,circle,scale=0.4] (h2) at (2.2,2.7) {};
        \node at (2.35,3.0) {$\Dpre \leadsto$};
        \node[draw,thick] (Dpost) at (3.25,3.0) {$\Dpost$};
    \path[draw,dotted] (h1) -- (s1);
    \path[draw,dotted] (h2) -- (s2);
    \path[draw,->,dashed] (h1) -- node[above=2pt] {$\tr$}  (h2);
    \path[draw,->,thick] (Dpost) -- (a1);

    \draw[decorate,decoration={brace,amplitude=10pt},yshift=8pt]
          (0.9,3) -- (4.2,3) node[black,midway,yshift=0.6cm]
          { \color{black!50!green} policy-specific audit};
    }
\end{tikzpicture}
\caption{Visual representation that relates a major run to a hypothetical
         minor run, where the mismatch in outputs can be traced to an earlier failed assumption.}
\label{fig:traces}
\end{figure}

A declassification policy~$\D(\tr)$ is formulated
over a representation of the \emph{execution trace}~$\tr$,
that is collected as auxiliary ``ghost'' state in the verification
as a syntactic expression of a suitable sequence data type.
The trace is tracked with an abstract I/O predicate~$\History(\tr)$
that specifies that the trace denoted by expression $\tr$
is the current history of the program at that point
and that $\tr$ is well-formed wrt. the application (examples in \cref{sec:case-studies}).



A policy $\D(\tr)$ over trace~$\tr$ takes the form
$\policy{\Dpre(\tr)}{\Dpost(\tr)}$.
It  consists of a \emph{condition}~$\Dpre$ on traces 
which states \emph{when} a declassification is permitted,
and a \changed{}{relational} \emph{release formula}~$\Dpost$,
which encodes \emph{what} information is allowed to be released.
For the example, the policy requires that the trace~$\tr$ has a length of at least 6;
it then justifies to classify the average over the numbers stored in~$\tr$ as~$\low$:
\begin{align*}
\D(\tr) \quad \defeq \quad \policy{\length(\tr) \ge 6~}{~\sumfunc(\tr)/\length(\tr) \haslabel \low}
\end{align*}
Here, the critical issue is to enforce whether all assumptions
made in the program are covered by the policy,
with respect to the symbolic path constraints at that point, 
which we call an \emph{audit} (exemplified below).
\begin{quote}
\textbf{Goal: certified adherence to policy.}
In a program that has been \changed{}{correctly} \emph{audited} 
with respect to a declassification policy,
the information leak associated with each assumption
is bounded by the policy.
\end{quote}
In \cref{fig:traces} this is depicted at the top (green):
each possibly failing assumption has to be justified from the policy~$\D$
with respect to the trace prefix~$\tr$ at that point.
Together with the policy-agnostic guarantee,
this implies that from the execution of a verified and audited program an attacker can only gain knowledge that is allowed by the policy.
We formalize this as \cref{thm:audit} in \cref{sec:audit}
by integrating a declarative semantics of policies over traces
with the program's schedules, i.e., with the ground-truth about the execution.


To prove the audit (i.e.\ that each assumption that introduces a leak is justified by the declassification policy~$\D$) we make use of
\emph{invariants} established by the verification
that connect the program's state to the abstract trace.
For the example, this means that the verification attaches the
following resource invariant to the shared state (acquired when calling
\texttt{avg\_lock()} and required to be true when calling \texttt{avg\_unlock()}):
\begin{align}
\inv{\tr,\texttt{st}}
    \quad \defeq \quad
        \History(\tr)    & \land \texttt{st->count} = \length(\tr)
            \label{eq:inv} \\
                {} & \land \texttt{st->sum} = \sumfunc(\tr)
                \nonumber
\end{align}

Together with path condition \texttt{st->count >= 6} from the \texttt{if} test,
the audit formally certifies that right before the assume, the policy condition $\Dpre(\tr)$ is entailed. Auditing then requires us to prove that the
policy release formula~$\Dpost(\tr)$ (what the policy allows to be declassified)
implies the assumption \texttt{avg} \haslabel \low (what the program
has actually declassified), which holds under the resource invariant and
thus this example satisfies secure declassification.
\ifExtended
\changed{}{The full proof for this example appears in \cref{app:motivation-proof}.}
\else
The full proof for this example appears in the extended version of this paper~\citep{EXTENDED}.
\fi

In summary, successful verification of an annotated program ensures that every increase in attacker knowledge occurs following an assumption that allows the new knowledge.  
Successful policy audit ensures that for any assumption there is a 
policy clause $\policy{\Dpre(\tr)}{\Dpost(\tr)}$
such that the path condition $P$ at the assumption implies the release condition $\Dpre(\tr)$
which in turn validates $\Dpost(\tr)$, i.e., the released information is allowed by the policy.

%

\section{Threat Model}%
\label{sec:threat-model}

We assume a program being verified that contains a number of top-level
functions like \avgdeclassthread and \avgsumthread in \cref{fig:motivation}.
These top-level functions are necessarily invoked from unverified, trusted
\emph{wrapper} code like \texttt{main()} that may invoke multiple
instances of each function and in parallel.
Verified application functions also make use of external library functions
like \printaverage that are not to be verified and must be trusted.

Our approach rests on a number of assumptions, including~\citep{Murray_vanOorschot_18} \emph{adversary expectations}---those assumptions that apply to the attacker---and \emph{domain hypotheses}---those that apply to the program's environment. We also make various meta-assumptions
that apply to the verification approach itself.

\myparagraph{Adversary Expectations.} We assume
a passive attacker with arbitrary security level~$\ell$
who knows the verified program's source code and its proof (as expressed in the annotations in the source). We assume that the attacker can
observe all data passed to any external library function whose precondition requires
that data to be $\ell$ or below (including \low).
Similarly, we assume that the attacker initially
knows only those program data that have been either marked at $\ell$ or below in the precondition of a top-level application function, or in the postcondition of an external library function.
Without loss of generality, we assume that all secrets are contained somewhere in the initial program state (e.g.~via a standard, deterministic oracle semantics~\citep{ClarkH08,Murray_MBGBSLGK_13} for functions like \avggetinput, not elaborated here).
Following the \emph{constant-time} security threat model~\citep{almeida2016verifying}, 
we assume
the attacker can observe not only the
program's execution time but also its memory access pattern and which
conditional branches it takes. These latter are made observable via timing effects induced by microarchitectural elements like caches. The attacker can also
observe the
program's concurrent schedule~\citep{Ernst_Murray_19,Frumin_KB_21}.

\myparagraph{Domain Hypotheses.}
We assume that the unverified code is correct and secure:  any preconditions of top-level verified application functions are
always satisfied whenever they are invoked by non-verified code, and
all external library functions
satisfy their contracts. We assume
that the verified program executes faithfully atop an
operating system whose scheduler does not leak sensitive information~\citep{Murray_MBGBSLGK_13} and is insulated from transient
execution effects~\citep{canella2019systematic}.

\myparagraph{Meta Assumptions.}
\changed{We assume that \Tool is correctly implemented.}
{\Tool implements the logic presented in this paper (\cref{sec:logic} with the extensions from \cref{sec:audit}), not for the simple command language defined in \cref{sec:logic} over which our soundness theorems are proved,
but for a substantial fragment of the C language.
We assume that this implementation is faithful to the theoretical ideas of this paper
as well as to the semantics of the supported part of  C (cf. \cref{sec:case-studies}).}
A small caveat is that \Tool currently models signed integers as
mathematical integers, so we assume that traditional verification methods
have also been applied to ensure absence of overflow, which is an orthogonal
problem to those considered by this work.
\changed{}{Since \Tool relies on Z3 to discharge verification conditions,
we assume that Z3 has no soundness bugs that affect our proofs.}
These assumptions are common to auto-active verifiers.
\changed{}{Similarly, for the mechanized proofs,
we rely on the soundness of Isabelle/HOL,
which thanks to its small-kernel architecture is highly trustworthy.}
%

\myparagraph{Claim.}
Then we claim that if verified with
\Tool, the program in question will adhere to its security policy as
specified, against the aforementioned attacker. Specifically this attacker
can learn no information other than what the program declassifies, and
all declassifications are in accordance with the security policy.

\section{Background: \SecCSL}\label{sec:logic}

\changed{}{Security Concurrent Separation Logic} (\SecCSL)
is a program logic proposed in \citep{Ernst_Murray_19}
for proving timing-sensitive noninterference of concurrent programs.
\changed{}{It extends Concurrent Separation Logic~\citep{OHearn_04}
by adding new assertions to capture security-related properties
and by adapting the existing proof rules to ensure that these are maintained soundly.}
We adapt and extend \SecCSL as the foundation of \Tool, for its native
support for compositional, modular, implementation-level reasoning.

\changed{[Moved/adapted from below:] }{
Judgements in the logic have the form
\[ \prove{\ell}{P}{c}{Q} \]
for a command~$c$ and pre-/postcondition $P$~and~$Q$,
where $\ell$ denotes the level of the attacker, typically $\low$.
It implies semantically that if the program~$c$ is executed when precondition~$P$ holds then, $c$'s
execution will be memory-safe
and when $c$ terminates the postcondition $Q$ will hold
(partial correctness).
Moreover, with respect to the adversary assumptions stated in \cref{sec:threat-model},
the execution will not leak information
to the~$\ell$-level attacker.
This includes via \emph{timing channels} as
the proof rules enforce that programs never branch on secrets nor perform secret-dependent memory accesses.
}

In this section we \changed{first}{\!\!} present the necessary background
on the \changed{syntax, }{assertion language,} the proof rules,
\changed{}{and the semantic foundations.}
\changed{and then 
explain how the contributions of this paper are integrated.}{
We rely on these preliminaries in \cref{sec:guarantee}
and \cref{sec:audit}, where we will discuss how the guarantees
entailed by $\prove{\ell}{P}{c}{Q}$ are formalized---these guarantees
are the key difference between our work and \SecCSL.}

The logic is defined for a core programming language
with commands~$c$:
\begin{align}
 c \; ::= \hspace*{1em} & 
         \ASSIGN{x}{e}
    \mid \LOCK{l}
    \mid \UNLOCK{l}
    \mid  c_1 ; c_2
    \mid  c_1 \parallel c_2 \mid \notag \\
    & 
         \ITE{\phi}{c_1}{c_2}
    \mid \WHILE{\phi}{c} \mid \notag \\
    & 
         \ASSUME{\rho}
    \mid \OUTPUT{\el}{\ev}
    \mid \LOAD{x}{\ep}
    \mid  \STORE{\ep}{\ev} \mid
        \tag{$\dagger$}
        \\
    & 
    \TRACE{e}
        \tag{$\ddagger$}
\end{align}
The details of most commands present in \SecCSL already remain unchanged:
this includes assignments, locking, sequential/parallel composition,
conditionals, and while loops.
\changed{We explain their semantics and proof rules in \cref{sec:assume}.}{}

\changed{}{\textbf{Extensions to \SecCSL: }}
Outputs, assumptions, and $\TRACE{e}$ are new wrt. \SecCSL.  The two commands
for memory load and store are modeled differently here to capture constant-time security.
The commands on the third line~($\dagger$) are relevant to the program's
policy-agnostic security guarantee
\changed{with respect to the adversary assumptions stated in \cref{sec:threat-model}}{as discussed in \cref{sec:guarantee}}.
The command~$\TRACE{e}$~($\ddagger$) represents the occurrence of application-specific events with data~$e$,
\changed{such as the return value of \avggetinput,}
{such as calls to \avggetinput where $e$ captures its return value.}
\changed{that}{These events} are not visible to the attacker but instead give a declarative
account of the guarantees entailed by policy enforcement,
\changed{}{as explained in \cref{sec:audit}}.
\changed{In preparation for our two main theorems in \cref{sec:guarantee}
and \cref{sec:audit},
in \cref{sec:property}
we present a strong auxiliary property over pairs of runs
that is induced by the proof rules.}{}

\smallskip

\noindent \textbf{Notation:} For a schedule~$\sigma$,
formalized as a mathematical sequence of actions (defined later),
we write $|\sigma|$ to denote its the length, $\Concat{\sigma_1}{\sigma_2}$ denotes concatenation, $\List{a}$ is the singleton sequence with element~$a$,
$\sigma_i$ is the $i$--th entry if $i < |\sigma|$ and $\sigma\cut{i}$ is the prefix of length~$i$.
Program states make use of stores~$s$,
modeled as mathematical maps; we write~$s(x)$ for lookup
and $s(x:=v)$ for map override.
Heaps~$h$ in addition have a domain~$\dom{h}$,
and we write \changed{}{$h = [a \mapsto v]$ for a singleton heap that maps address~$a$ to value~$v$ and} $h_1 \uplus h_2$ for the union of two heaps,
implying that they need to have disjoint domains.

\changed{
\subsection*{Background and Preliminaries}%
\label{sec:seccsl}}{}

\newcommand{\pointsto}[1]{\stackrel{#1}{\mapsto}}

\changed{\SecCSL is a program logic proposed by \citet{Ernst_Murray_19}
for proving timing-sensitive noninterference of concurrent programs.
It extends Concurrent Separation Logic~\citep{OHearn_04}
by adding new assertions to capture security-related properties
and by adapting the existing proof rules to ensure that these are maintained soundly.}{}


\subsection{Expressions and Assertions}

The typed language of expressions~$e$ 
includes boolean formulas~$\phi \colon \mathit{Bool}$
and a designated sort $\mathit{Label}$ for security labels,
including at least the constants $\low$ and $\high$
and a binary relation $\sqsubseteq$ that satisfies the lattice axioms.
The assertion language to formulate pre-/postconditions
includes the standard connectives, quantifiers, and separation logic primitives
points-to and separating conjunction:
\begin{align*}
& \text{assertion } P \ ::= \\
&        \quad  \phi
    \mid e \haslabel \el
    \mid  \Emp
    \mid \pto{\ep}{\ev}
    \mid  P_1 \star P_2
    \mid  P_1 \implies P_2
    \mid  \Exists{x}{P} 
    \mid \cdots 
\end{align*}
In addition, we have \emph{value sensitivity} or classification $e \haslabel \el$,
which expresses that the value of~$e$ is safe to be known for an $\el$ attacker,
where $e \haslabel \low$ is the strongest such assertion
and $e \haslabel \high$ is just true.
Expressive power comes from the reflection of the security lattice
into the assertion language, such that labels~$\ell$ are symbolic expressions, too.
For example, $e \haslabel (d \mathrel{?} \high : \low)$ denotes a classification of~$e$
conditional on the current value of~$d$, where $(\_ \mathrel{?} \_ : \_)$ is an if-then-else expression.
\changed{}{Spatial assertions include the empty heap $\Emp$,
the points-to predicate $\pto{\ep}{\ev}$,
i.e., $\ep$ points to valid memory containing value $\ev$,
and the separating conjunction $P_1 \star P_2$,
i.e., assertions $P_1$ and $P_2$ hold on disjoint parts of the heap, respectively,
as usual~\citep{reynolds2002separation,OHearn_04}.}

An assertion is called \emph{pure} if it does not make reference to the heap,
and it is called \emph{relational} if it includes a classification.
We denote pure relational formulas by letter $\rho$ in the following.

Semantically, expressions~$e$ are evaluated over a store~$s$,
written~$\sem{e}{s}$ \changed{(the details do not matter here)}{in the standard way}.
Assertions, in contrast, are evaluated over a \emph{pair of states},
each consisting of a store~$s$ and a heap~$h$,
with respect to the attacker level~$\ell$.
We write~$s,s' \models_\ell \rho$ when pure assertion~$\rho$ holds
and~$(s,h),(s',h') \models_\ell P$ that spatial assertion~$P$ holds,
where intuitively~$s$ and~$h$ are taken from the actual ``major'' run of the program that is compared to~$s'$ and~$h'$ from
some hypothetical ``minor'' run in Beringer's terminology~\citep{Beringer12} (cf. \cref{fig:motivation}).

\begin{figure}
\begin{align*}
(s,h),(s',h') \models_\ell \phi
    &\quad\defeq\quad s,s' \models_\ell \phi \land h=h'=\emptyheap
            \quad\text{for } \phi \text{ pure } \\
\text{where} \quad s,s' \models_\ell \phi
    &\quad\defeq\quad \sem{\phi}{s} \land \sem{\phi}{s'}
    \\[4pt]
s,s' \models_\ell e \haslabel \el
    &\quad\defeq\quad \sem{\el}{s} \sqsubseteq \ell \land \sem{\el}{s'} \sqsubseteq \ell
        \implies \sem{e}{s} = \sem{e}{s'}
\end{align*}
\begin{added}
\begin{align*}
(s,h),(s',h') \models_\ell \pto{\ep}{\ev}
    &\quad\defeq\quad h = [\sem{\ep}{s} \mapsto \sem{\ev}{s}] \\
    &\rlap{\hspace{4pt} \text{and}\ } \phantom{\quad\defeq\quad}
        h' = [\sem{\ep}{s'} \mapsto \sem{\ev}{s'}] \\[4pt]
(s,h),(s',h') \models_\ell P_1 \star P_2
    &\quad\defeq\quad h = h_1 \uplus h_2 \text{ and } h' = h'_1 \uplus h'_2 \\
    &\rlap{\hspace{4pt} \text{and}\ } \phantom{\quad\defeq\quad}
        (s,h_i),(s',h_i') \models_\ell P_i \text{ for } i=1,2 \\[4pt]
(s,h),(s',h') \models_\ell P_1 \implies P_2
    &\quad\defeq\quad
(s,h),(s',h') \models_\ell P_1 \\
    &\hspace{1cm}\text{implies } (s,h),(s',h') \models_\ell P_2
\end{align*}
\begin{align*}
(s,h),(s',h') \models_\ell \exists x.\ P
    &\quad\defeq\quad
        \text{there are } v,v' \text{ with } \\
    &\phantom{\quad\defeq\quad}
        (s(x:=v),h),(s'(x:=v'),h') \models_\ell P
\end{align*}
\end{added}
    \caption{\changed{}{Relational semantics of assertions.}}
    \label{fig:sem-assert}
\end{figure}

\changed{}{The semantics of assertions is shown in \cref{fig:sem-assert}.}
Non-relational formulas~$\phi$ must hold in both states individually.
Value sensitivity $e \haslabel \el$ enforces
agreement of the values $\sem{e}{s}$ and $\sem{e}{s'}$ in both states
for an attacker at level~$\ell$ that is at least as high as $\el$.
Note that this semantics improves on original \SecCSL~\citep{Ernst_Murray_19} by
not requiring $\el$ to agree, which is crucial for specifying the Wordle
security policy (\cref{sec:wordle}). 
As in \SecCSL~\citep{Ernst_Murray_19}, pure assertions impose an empty heap.
\changed{whereas spatial assertions
extend standard Separation Logic semantics to pairs of states point-wise. }%
{Spatial assertions extend standard Separation Logic (SL) semantics to pairs of states point-wise.}

In contrast to the grammar shown, the original \SecCSL supports in addition
\emph{memory location sensitivity},
written $e_p \pointsto{\el} e_v$,
which expresses that both memory \emph{access} via address~$e_p$ is observable
to an $\el$-attacker as well as the actual value stored therein.
However, this feature comes with some trade-offs,
e.g. assertions are restricted to the positive fragment of SL
and logical entailment becomes somewhat complex.
In \cref{sec:rules} below we offer a different approach based on the standard points-to
assertion that avoids these limitations.


\subsection{Proof Rules}\label{sec:rules}

The proof rules of \SecCSL to derive judgements $\prove{\ell}{P}{c}{Q}$
are all ``matched'' (aka ``synchronous'') rules, where the control flow of program~$c$
is always the same between the major and minor run.
The approach is adequate for \emph{timing-sensitive} security
as discussed in \cref{sec:threat-model}
and integrates nicely into existing logics
\changed{at the expense of not being able to branch on secrets.
}{(though it is not adequate for weaker security properties that allow some branching on secrets).}
\ifExtended
\changed{}{\cref{app:motivation-proof} shows the proof sketch built from these rules
  for the example from \cref{sec:motivation}.}
\else
The proof sketch built from these rules for the example from \cref{sec:motivation}
appears in this paper's extended version~\citep{EXTENDED}.
\fi

\begin{figure*}[t]
  \begin{mathpar}
    \changed{}{\infer{\ }{\prove{\ell}{P(e)}{\ASSIGN{x}{e}}{P(x)}}} 

    \changed{}{\infer{\ }{\prove{\ell}{P}{\LOCK l}{P \star \inv{l}}}}

    \changed{}{\infer{\ }{\prove{\ell}{P \star \inv{l}}{\UNLOCK l}{P}}}

    \changed{}{
        \infer{\prove{\ell}{P_1}{c_1}{P_2} \and
               \prove{\ell}{P_1}{c_2}{P_2}}
              {\prove{\ell}{P_1}{c_1;c_2}{P_2}}}

    \changed{}{
        \infer{\prove{\ell}{P_1}{c_1}{Q_1} \and
               \prove{\ell}{P_1}{c_2}{Q_2} \and
               \fv{c_i} \cap \mathrm{mod}(c_j) = \varnothing}
              {\prove{\ell}{P_1 \star P_2}{c_1 \parallel c_2}{Q_1 \star Q_2}}}

    \infer{\seccheck{P \implies \phi\haslabel\ell} \and \prove{\ell}{P \star \phi}{c_1}{Q} \and \prove{\ell}{P \star \lnot \phi}{c_2}{Q}}%
          {\prove{\ell}{P}{\ITE{\phi}{c_1}{c_2}}{Q}} 

    \infer{\seccheck{P \implies \phi\haslabel\ell} \and \prove{\ell}{P \star \phi}{c}{P}}%
          {\prove{\ell}{P}{\WHILE{\phi}{c}}{P \star \lnot \phi}} 

    \changed{}{
    \infer{\seccheck{P \implies \phi\haslabel\ell} \and \prove{\ell}{P\ \star \phi}{c}{Q} \and \prove{\ell}{P\ \star \lnot \phi}{c}{Q}}%
          {\prove{\ell}{P}{c}{Q}} \textsc{Split}}
  \end{mathpar}
  \caption{Some rules of \SecCSL, where $\fv{e}$ denotes free variables of~$e$.
  \label{fig:rules}
  \seccheck{\textbf{Highlighted}} side-conditions represent security checks.
    }
\end{figure*}

The proof rules of \SecCSL work exactly like those of traditional Concurrent SL~\citep{reynolds2002separation,OHearn_04}
except that the rules for $\textbf{if}$ and $\textbf{while}$ in \cref{fig:rules}
\changed{as well as logical case splits}{}
enforce that the respective branch condition~$\phi$ is not secret from the view of the $\ell$-attacker.
\changed{}{A similar requirement applies to \emph{logical} case splits (rule \textsc{Split}): semantically, there are four combinations how~$\phi$
could evaluate, of which our assertion language can represent two (cf. second line in \cref{fig:sem-assert}).}
These side conditions related to security are highlighted in \seccheck{\mathrm{light\ gray}}.
 \changed{}{
Rules for sequential and parallel composition, acquiring and releasing locks, as well as the frame and consequence rules
are entirely standard, see e.g. \citep{reynolds2002separation} (sequential fragment) and~\citep[Fig.~2]{gotsman2011precision} (concurrency and locks).
The locking rules transfer ownership of a \emph{lock invariant}
denoted $\inv{l}$ for lock~$l$ such as the one in \cref{eq:inv}.}

\begin{figure}[t]
    \begin{mathpar}
    \infer{\ }{\prove{\ell}{\Emp}{\ASSUME{\rho}}{\rho}} \textsc{Assume}

    \infer{\ }{\prove{\ell}{\seccheck{\ev \haslabel \el \star \el \haslabel \ell}}{\OUTPUT{\el}{\ev}}{\Emp}} \textsc{Output}


    \infer{x \notin \fv{\{e,\ep\}}}{\prove{\ell}{\pto{\ep}{e} \ \seccheck{\star\  \ep :: \ell}}{\LOAD{x}{\ep}}{x = e \star \pto{\ep}{e}}} \textsc{Load}

    \infer{\ }{\prove{\ell}{\pto{\ep}{e} \ \seccheck{\star\  \ep :: \ell}}{\STORE{\ep}{\ev}}{\pto{\ep}{\ev}}} \textsc{Store}
    \end{mathpar}
    \caption{Proof rules for commands that produce
             security-relevant actions; security checks are \seccheck{\textbf{highlighted}}.
             Discussion of command $\TRACE{e}$ is deferred to \cref{sec:audit}.}
    \label{fig:ioa}
\end{figure}

The proof rules for \changed{the~}{atomic} commands except $\TRACE{e}$ are shown in \cref{fig:ioa}. 
Rule \textsc{Assume} just manifests the assumed formula~$\rho$ to the postcondition\changed{}{~\citep[Sec. VII B]{chudnov2018assuming}}.
In comparison to the other three rules, there is no justification yet
why this assumption can be made---as described in \cref{sec:motivation},
this justification comes from the global declassification policy instead
as formalized in \cref{sec:audit}, \changed{}{where we will also show the rule for command $\TRACE{e}$}.

The rule for commands $\OUTPUT{\el}{\ev}$, which \changed{recall}{} outputs the value~$\ev$
to the attacker who is at security level~$\el$, requires that the
$\el$-level attacker knows the value~$\ev$ being output and, hence, does not
learn any new information. It also requires that the expression
$\el$ denoting the level at which~$\ev$ is being output is known to
the $\ell$-level attacker since, otherwise, the choice of the level on which
the output is occurring could leak information.
In the example from \cref{sec:motivation}, an output command is represented as library function \printaverage, which specifies that its argument must be $\low$.

In accordance with the threat model from \cref{sec:threat-model},
the rules for loading and storing via pointer~$\ep$ are similarly guarded
to not leak information via the memory access pattern,
by enforcing that the pointer is not sensitive.

\subsection{Program Semantics}\label{sec:sem-cmd}
\changed{We briefly sketch how semantics of program execution is formulated
as we rely on this and extend it later.}
{We briefly sketch how program execution is
modeled by a typical small-step operational semantics,
similarly to~Vafeiadis' formulation~\citep{vafeiadismfps11},
as we rely on this and extend it later.}
A \emph{configuration} captures the runtime state of a program:
\changed{}{
\begin{align*}
\text{configuration }
    k,k' ::= \Run{L}{c}{s}{h} \mid \Stop{L}{s}{h} \mid \Abort
\end{align*}
}
\changed{There are three kinds of semantic configurations~$k$:}{}
The configuration
$\Run{L}{c}{s}{h}$ represents a running program whose current state is given
by the store~$s$ and heap~$h$ and whose remaining program to execute is the
command~$c$; $L$ is the set of locks not currently acquired. The
configurations $\Stop{L}{s}{h}$ and $\Abort$ represent respectively the (successfully)  terminated
program whose final unacquired locks are~$L$ and whose final state is the
store~$s$ and heap~$h$, and aborted programs (e.g. due to a memory violation).
\SecCSL associates to each lock a resource invariant,
\changed{}{like the one shown in \cref{eq:inv}} in \cref{sec:motivation}.
We denote by $\invs{L}$ the conjunction of the invariants
of locks in the set~$L$, i.e., shared state that is
not currently accessed in a critical section.

The transition relation $\Step{k}{\sigma}{k'}$
represents one execution step from configuration~$k$ to configuration~$k'$ producing the \emph{schedule}~$\sigma$.
Schedules~$\sigma$ record the sequence of
\emph{actions} of the program that are relevant to
capture the observational powers of the attacker under the threat model
of \cref{sec:threat-model}.
We denote by $\Steps{k}{\sigma}{k'}$ the reflexive transitive closure of the transition relation.
The action~$\Fst$ (respectively~$\Snd$) represents
the decision to schedule the left (respectively right) command in a parallel
composition $\parallel$.
The action $\Tick$ represents the execution of atomic command like assignments.
In the next sections we will extend this semantic model of actions
to capture the program's input-/output behavior,
those steps that correspond to assumptions made in the verification,
and we will also make memory access explicit in the schedule.

\changed{}{Our security guarantees will be expressed relative to what an attacker can observe
from the schedules of runs and whether information leaks are covered by policies.
In \cref{sec:guarantee} and \cref{sec:audit} we will capture these notions formally.}
\changed{}{
  \ifExtended
  The complete set of semantic rules is in \cref{app:sem}, \cref{fig:sem-cmd}.
  \else
  The complete set of semantic rules appears in the extended version~\citep{EXTENDED}.
  \fi
    In the next section we present those that are relevant to the our extensions of \SecCSL.}

While the noninterference guarantee of \SecCSL~\citep[Theorem~2]{Ernst_Murray_19}
focuses on comparing heap locations, we point out that
$\prove{\ell}{P}{c}{Q}$ implies that for a given major execution
$\Steps{\Run{L}{c}{s}{h}}{\sigma}{k}$
of program~$c$ ending in some final/intermediate configuration~$k$,
any minor run
$\Steps{\Run{L}{c}{s'}{h}}{\sigma'}{k'}$
of the same length, as encoded by $|\sigma| = |\sigma'|$,
will satisfy that the attacker learns nothing from the schedule,
which we relax to equivalence of observations below, written $\sigma \approx_\ell \sigma'$, after introducing additional types of actions into the schedule.
Moreover, the matched rules enforce semantically that~$k$ and~$k'$ are either both final
or both running configurations with the same program.
\changed{We state a corresponding result in \cref{sec:property}
as a step towards our main results in \cref{sec:guarantee}
and \cref{sec:audit}.}{}



\section{Policy-Agnostic Guarantee}%
\label{sec:guarantee}

\changed{}{In this section, we discuss the \emph{policy-agnostic} part of our contribution.
It is based on the notion of semantic \emph{actions},
which give rise to \emph{schedules} that are much more informative than those of vanilla \SecCSL,
which in turn allows us to reason about attacker knowledge gained from observing
an execution of a program that exhibits such schedules (\cref{defn:visible,defn:uncertainty}).}

\begin{deleted}
\subsection*{A Semantic Model of Actions and Schedules}%
\label{sec:assume}

In this section we discuss the fragment of the language~($\dagger$)
and~($\ddagger$), \changed{}{i.e., our additions to \SecCSL} that deal with outputs, assumptions, (constant-time) memory access, and trace events:
\begin{align*}
& \text{command } c ::= \\
& \quad \cdots \mid \ASSUME \rho \mid \OUTPUT{\el}{\ev}
    \mid \LOAD{x}{\ep} 
    \mid \STORE{\ep}{\ev} 
    \mid \TRACE{e}
\end{align*}
Multiple output channels are modeled in this formalization
by attaching a security label $\el$ to the output command,
so that output~$\ev$ is visible only to an attacker with that observational capability.

\myparagraph{Program Semantics (\cref{fig:sem}).}

The commands in this section have in common
that they expose an \emph{action} as part of the schedule over which
we can express and relate the different aspects of security.
\end{deleted}

The grammar of actions is as follows
\[ \text{action } a ::= \tau \!\mid\! \Fst \!\mid\! \Snd 
            \!\mid\! \Out{\ell}{v} 
            \!\mid\! \Assumption{s~\rho}
            \!\mid\! \Load{p}
            \!\mid\! \Store{p}
            \!\mid\! \Trace{e} \]
where internal action~$\tau$ and concurrent scheduling decision $\Fst$, $\Snd$
are inherited from \SecCSL (cf. \cref{sec:sem-cmd}), and the new actions
arise from the execution steps of the commands discussed in this section.
The key transitions are shown in \cref{fig:sem}.
Heap access through an invalid pointer instead produces an $\Abort$ successor configuration (but still exposes the same action in the schedule).
The rule for memory writes (stores) is analogous to that of loads.
Formulated this way, the extra provisions or security in terms of the action labels
do not in any way constrain the program execution,
it just exposes the necessary information for a later analysis.

\begin{figure}
\begin{align*}
\Run{L}{\ASSUME{\rho}}{s}{h}
& \xrightarrow{~\List{\Assumption{s~\rho}}~}
    \Stop{L}{s}{h} \\[8pt]
\Run{L}{\OUTPUT{\el}{\ev}}{s}{h}
& \xrightarrow{~\List{\Out{\sem{\el}{s}}{\sem{\ev}{s}}}~}
    \Stop{L}{s}{h}
    \\[8pt]
\Run{\LOAD{x}{\ep}}{L}{s}{h}
& \xrightarrow{~\List{\Load{\sem{\ep}{s}}}~}
    \Stop{L}{s'}{h} \\
    \text{for } & s' = s\big(x:=h(\sem{\ep}{s})\big) \text{ if } \sem{\ep}{s} \in \dom{h}
        \\[8pt]
\Run{L}{\TRACE{e}}{s}{h}
& \xrightarrow{~\List{\Trace{\sem{e}{s}}}~}
    \Stop{L}{s}{h}
\end{align*}
\caption{Semantic rules for program execution.}
\label{fig:sem}
\end{figure}

\begin{deleted}

\subsection*{Strong Characterization of Executions}%
\label{sec:property}

[content moved to the appendix as part of the full soundness proof,
unfortunately there isn't enough room in the main body]
\end{deleted}

\changed{While \cref{lem:secure} gives a precise account
of security from in a semantic, glassbox view,
we still lack a formal characterization from the adversarial perspective.
In this section}{} We define attacker knowledge based on the \emph{observation}
that they can make of a schedule and what can be learned to reduce one's \emph{uncertainty}
about possible initial states resp. secrets.
The first main result, \cref{thm:guarantee}, formalizes the promise made in
\cref{sec:motivation} that any gain in knowledge is linked to an earlier assumption failure.

\begin{definition}[Attacker-visible actions and schedules]
    \label{defn:visible}
For an action~$a$, $\visible{a}$ keeps~$a$ if it is visible to an $\ell$ attacker
and erases it into~$\tau$ otherwise.
This definition is lifted to schedules in the obvious way as $\visible{\sigma}$.
{\normalfont
\begin{align*}
\visible{\tau} &= \tau \qquad &
\visible{\Fst} &= \Fst \\
\visible{\Assumption{s~\rho}} &= \tau
    &
\visible{\Snd} &= \Snd \\
\visible{\Trace{e}} &= \tau
    &
\visible{\Load{p}} &= \Load{p} \\
    &
    &
\visible{\Store{p}} &= \Store{p} \\[4pt]
\visible{\Out{\ell'}{v}} &=
    \rlap{$\Out{\ell'}{v}$ if $\ell \sqsubseteq \ell'$ else $\tau$}
    &
    &&
\end{align*}}
\end{definition}
The concurrent schedule is always visible and so are memory accesses.
An output action is visible only for an attacker who is allowed to observe the respective channel.
In contrast, assumption steps are modeled to \emph{not} be visible
because they do not constitute actual observations,
albeit an attacker with knowledge about the program's source code
knows their occurrence and the assumed formula~$\rho$,
due to the fact that executions are always matched.
\begin{definition}[Observably equivalent schedules]
    \label{defn:equivalent}
Two schedules $\sigma$ and $\sigma'$ are observably equivalent
for an $\ell$ attacker, written $\sigma \approx_\ell \sigma'$,
if their $\ell$-visibility projection is the same:
\[ \sigma \approx_\ell \sigma' \quad\defeq\quad
    \visible{\sigma} = \visible{\sigma'} \]
\end{definition}
Note that this implies that the length of $\sigma$ and $\sigma'$ is the same.
\changed{and that $\sigma \cong_\ell \sigma'$ implies $\sigma \approx_\ell \sigma'$
but not vice versa.}{}

Information leakage is phrased in the standard \emph{knowledge-based}
style~\citep{Askarov_Sabelfeld_07,Broberg_Sands_09,Askarov_Chong_12,Broberg_vDS_15}.
This style of security property talks about the attacker's knowledge
in order to state that the attacker doesn't learn anything that should not
have been revealed to them. Knowledge is captured in terms of the
attacker's \emph{uncertainty} \addition{about the program's secret data that the
attacker is not supposed to learn}. Specifically, uncertainty is the complement
of knowledge so decreased attacker uncertainty corresponds to an increase
in attacker knowledge.
The following definition captures this intuition.
Together with $\visible{\_}$ it serves as the \emph{formal specification} of the
adversarial capabilities outlined in \cref{sec:threat-model}.
\begin{definition}[Attacker Uncertainty]\label{defn:uncertainty}
For a given initial state $(s,h)$ and schedule~$\sigma$ for command~$c$,
the attacker must accept as explanations all possible initial states $(s',h')$
which can produce an observably equivalent schedule~$\sigma'$:
\begin{align*}
& \uncertainty{\ell}{P}{\sigma}{c}{L}{s}{h} \quad \defeq \\
& \quad \big\{~ (s',h') \mid \exists\ \sigma'\ k'.\ \
    (s,h), (s',h') \vDash  P \star \invs{L}\ \\
& \hspace*{3cm}
\land \Steps{\Run{L}{c}{s'}{h'}}{\sigma'}{k'} \land \schedcteq{\ell}{\sigma}{\sigma'}~\big\}
\end{align*}
\end{definition}

\changed{}{
The property assesses how the attacker's uncertainty changes over time.
Specifically we can use it to compare the attacker's uncertainty
before and after each execution step. Any decrease in uncertainty represents new
information that the attacker learned from that step. The property requires
that this change in knowledge must be bounded by what the attacker is
permitted to learn by that step of execution:
For the policy-agnostic guarantee, each execution step
is allowed to reveal (i.e.\ decrease the attacker's uncertainty about)
only failed $\ASSUME~\rho$ steps,
which in turn correspond to unsatisfied $\Assumption{\rho}$ actions
in the schedule:}

\begin{definition}[Assumption failure]
An assumption failure occurs at position~$n$
with
$n < |\sigma|$ and $n < |\sigma'|$
in a pair of schedules,
if at that point both contain the same assumption~$\rho$,
and that assumption is not satisfied between the associated stores
recorded in the action.
{\normalfont
\begin{align*}
& \mathsf{assumption\text{-}failed}_\ell(n, \sigma, \sigma') \\
&  \quad \defeq \quad \Exists{s\ s'\ \rho}{
        \sigma_n = \Assumption{s~\rho} \text{ and }
        \sigma'_n = \Assumption{s'~\rho} \text{ and }
        s, s' \not\models_{\ell}} \rho
\end{align*}}
\end{definition}


Complementary \changed{}{to uncertainty}, we formalize what the $\ell$-level attacker is allowed to learn from a single execution step following a known execution prefix from initial state $(s,h)$
with schedule~$\sigma$.
This will be defined as a set $\assumedrelease{\ell}{P,\sigma}{c}{L}{s}{h}$
of initial states $(s',h')$ which the attacker is allowed to \emph{exclude} from their uncertainty by observing such an additional step.
\begin{definition}[Release by assumption]\label{defn:release}
\begin{align*}
& \assumedrelease{\ell}{P}{\sigma}{c}{L}{s}{h} \quad \defeq \\
& \quad \big\{~(s',h') \mid \exists\ \sigma'\ k'.\
    (s,h), (s',h') \vDash  P \star \invs{L}\
 \\
&    \hspace{3cm} \land \Steps{\Run{L}{c}{s'}{h'}}{\sigma'}{k'} \land \schedcteq{\ell}{\sigma}{\sigma'} \\
&    \hspace{3cm} \land \exists\ n.\ n < |\sigma| \land n < |\sigma'| \\
&    \hspace{4cm} \land \mathsf{assumption\text{-}failed}(n,\sigma,\sigma') ~\big\}
\end{align*}
\end{definition}
This definition mirrors \cref{defn:uncertainty} except that
only those initial states $(s',h')$ are kept that can lead to a failed assumption.


\begin{theorem}[Policy-agnostic security guarantee]%
    \label{thm:guarantee}
If $\prove{\ell}{P}{c}{Q}$ then for a major run
$\Steps{\Run{L}{c}{s}{h}}{\sigma_1}{k_1}$
the knowledge gain from one additional step $\Step{k_1}{\sigma_2}{k_2}$,
expressed as the difference in uncertainty,
is bounded by the release condition:
\begin{align*}
& \uncertainty{\ell}{P}{\sigma_1}{c}{L}{s}{h} \setminus
  \uncertainty{\ell}{P}{\Concat{\sigma_1}{\sigma_2}}{c}{L}{s}{h} \\
{} \subseteq {} & \assumedrelease{\ell}{P}{\sigma_1}{c}{L}{s}{h}
\end{align*}
\end{theorem}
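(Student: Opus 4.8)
The plan is to argue by contraposition at the level of individual initial states. Take an arbitrary $(s',h')$ in the left-hand difference, i.e.\ $(s',h')$ lies in $\uncertainty{\ell}{P}{\sigma_1}{c}{L}{s}{h}$ but not in $\uncertainty{\ell}{P}{\Concat{\sigma_1}{\sigma_2}}{c}{L}{s}{h}$. Unfolding \cref{defn:uncertainty}, membership in the first set gives a minor run $\Steps{\Run{L}{c}{s'}{h'}}{\sigma'}{k'}$ with $(s,h),(s',h')\models P\star\invs{L}$ and $\sigma'\approx_\ell\sigma_1$. The task is to show $(s',h')\in\assumedrelease{\ell}{P}{\sigma_1}{c}{L}{s}{h}$, i.e.\ that \emph{some} such observationally-equivalent minor run additionally exhibits an assumption failure at a position below the common length. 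The key observation is that non-membership in the second set means: for \emph{every} minor run from $(s',h')$ whose schedule is $\approx_\ell$-equivalent to the \emph{extended} major schedule $\Concat{\sigma_1}{\sigma_2}$, something goes wrong --- and because runs are matched (same control flow in both runs, as established in \cref{sec:rules,sec:sem-cmd}), the only thing that can go wrong is that the minor run cannot take the extra step matching $\sigma_2$ while preserving $\ell$-observational equivalence.

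The core of the argument is then a case analysis on the action $\sigma_2$ produced by the additional major step $\Step{k_1}{\sigma_2}{k_2}$, using the semantic rules of \cref{fig:sem} together with the soundness of the proof rules of \SecCSL (\cref{fig:rules,fig:ioa}) as provided by the assumed judgement $\prove{\ell}{P}{c}{Q}$. For the \emph{visible} actions --- concurrent scheduling $\Fst/\Snd$, memory accesses $\Load{p}/\Store{p}$, and outputs $\Out{\el}{v}$ at a level $\el$ with $\ell\sqsubseteq\el$ --- the side-conditions highlighted in \cref{fig:rules,fig:ioa} (the pointer is $\ell$-non-secret for loads/stores; $\ev\haslabel\el$ and $\el\haslabel\ell$ for outputs; branch conditions are $\ell$-low) guarantee via the relational assertion semantics of \cref{fig:sem-assert} that the minor run can \emph{replicate} the same visible action. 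Hence for these cases the minor run \emph{can} be extended observably-equivalently, contradicting non-membership in the second uncertainty set --- so in those cases the difference set is empty and the inclusion is vacuous. The only remaining possibility is $\sigma_2 = \List{\Assumption{s_1\ \rho}}$ arising from an $\ASSUME\rho$ step (the $\Trace{e}$ and $\tau$ cases are likewise invisible and replicable since they impose no relational constraint). Here $\visible{\sigma_2}=\tau$, so $\Concat{\sigma_1}{\sigma_2}\approx_\ell\sigma_1\cdot\List{\tau}$; the minor run, being matched, also reaches an $\ASSUME\rho$ step and can take it, producing $\sigma'\cdot\List{\Assumption{s'_1\ \rho}}$ which is automatically $\approx_\ell$-equivalent. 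So the \emph{only} way $(s',h')$ fails to stay in the uncertainty after the extra step is that no such continuation yields a run that stays equivalent to the major run \emph{in some later step's requirements} --- but since we only extended by one step, the failure must be that $s_1, s'_1 \not\models_\ell \rho$, i.e.\ precisely $\mathsf{assumption\text{-}failed}_\ell(|\sigma_1|,\Concat{\sigma_1}{\sigma_2},\sigma'\cdot\List{\Assumption{s'_1\ \rho}})$. This places $(s',h')$ in $\assumedrelease{\ell}{P}{\sigma_1}{c}{L}{s}{h}$ by \cref{defn:release}, completing the inclusion.

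The main obstacle I anticipate is making the "matched runs" argument fully rigorous: I need an auxiliary lemma --- essentially the strong characterization of executions alluded to in the excerpt (the result that $\prove{\ell}{P}{c}{Q}$ forces the minor run to have the same control flow and lets it replicate every visible action while $P\star\invs{L}$ is maintained) --- stating that from any $(s',h')$ with $(s,h),(s',h')\models P\star\invs{L}$, a prefix of the major execution of length $n$ can be mirrored by a minor execution with $\approx_\ell$-equivalent schedule, and conversely any minor execution mirroring a major prefix agrees step-for-step on visible actions \emph{unless} it hits an assumption action whose relational formula fails. Proving this lemma is itself an induction on the length of the execution prefix, with a nested case analysis on the last proof rule applied --- the locking rules ($\LOCK l$, $\UNLOCK l$) need care because they transfer the invariant $\invs{\cdot}$ in/out of the frame, and the parallel-composition rule requires threading $\approx_\ell$-equivalence through the interleaving. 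Once that lemma is in hand, the theorem follows by the short contrapositive argument sketched above; I would expect the bulk of the formal Isabelle development to reside in that lemma rather than in Theorem~\ref{thm:guarantee} itself.
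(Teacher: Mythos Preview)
Your overall shape---take $(s',h')$ in the difference, invoke a matched-runs lemma, derive a contradiction or an assumption failure---is right, and you correctly identify that the heavy lifting lives in the auxiliary lemma (this is exactly \cref{lem:secure} in the paper). But the case analysis on the \emph{type} of $\sigma_2$ is the wrong decomposition, and it leads to two concrete errors.

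First, your treatment of the $\ASSUME$ case is confused. You yourself observe that assumption actions are invisible, so the minor run \emph{does} extend to a schedule $\approx_\ell$-equivalent to $\Concat{\sigma_1}{\sigma_2}$---which means $(s',h')$ \emph{is} in the second uncertainty set, contradicting its membership in the difference. Whether $s_1,s'_1\models_\ell\rho$ is irrelevant to the definition of uncertainty, which only tests observational equivalence. So locating the failure at position $|\sigma_1|$ is unsupported; and even if it were, it would not witness membership in $\assumedrelease{\ell}{P}{\sigma_1}{c}{L}{s}{h}$, because \cref{defn:release} requires $n<|\sigma_1|$ strictly.

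Second, your treatment of the visible-action case silently assumes no assumption has failed \emph{earlier} in $\sigma_1$. The side-conditions from \cref{fig:ioa} (pointer is $\ell$-low, output value is $\el$-low, etc.) are guaranteed only relative to the intermediate assertion, which the matched-runs lemma maintains \emph{only until} the first assumption failure. If some $\rho$ already failed at position $m<|\sigma_1|$, the minor run need not replicate the visible action at step $|\sigma_1|$---and that prior failure is precisely what witnesses the release.

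The paper's case split is therefore not on $\sigma_2$ but on whether an assumption failure occurs anywhere in the prefix $\sigma_1$. If yes, that failure directly witnesses $\mathsf{assumed\text{-}release}$. If no, the lemma gives full alignment $\sigma_1\cong_\ell\sigma'_1$ and matched running configurations $k_1,k'_1$; since the semantics is left-total, $k'_1$ takes a step $\sigma'_2$, and one more application of the lemma gives $\sigma_2\cong_\ell\sigma'_2$ (alignment holds for assumption steps unconditionally and for all other steps by the security checks), hence $\sigma_2\approx_\ell\sigma'_2$, hence $(s',h')$ lies in the second uncertainty set---contradiction. Restructure along these lines and the argument goes through.
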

\begin{deleted}
The connection between
the information leak and the failed assumption
(cf. red arrow in \cref{fig:traces})
is made as follows.
The corresponding formula~$\rho$
is tracked as part of the verification in the precondition~$P$ of
$\prove{\ell}{P}{c}{Q}$.
Later, in proof rules like \textsc{Output} in \cref{fig:ioa},
when we actually have to \emph{prove} that a value is public,
any information leak that is due to a prior assumption failure
will cause a contradiction with $\rho$ as part of the application-specific reasoning
and thus vacuously entail the consequences of \cref{lem:secure}.
\end{deleted}

\section{Conformance with Policies}%
\label{sec:audit}

With $\mathsf{assumed\text{-}release}_\ell(\_)$
we have a precise characterization of possible information leaks
due to failed assumptions. Next we show how these leaks can be justified
and formally bounded in terms of high-level declassification policies:
\begin{definition}[Declassification policy]
A declassification policy
$\D(\tr) = \policy{\Dpre(\tr)}{\Dpost(\tr)}$
specifies a condition~$\Dpre$ that states \emph{when}
the policy applies and a relational release formula~$\Dpost$
that encodes \emph{what} information is allowed to be released then.
\end{definition}
Both constituents \changed{}{$\Dpre$ and $\Dpost$} may mention the current trace $\tr \colon \mathit{List}\langle\mathit{Event}\rangle$,
a logical list of application-specific $\mathit{Event}$s.
In the example from \cref{sec:motivation}, events are just
the numbers returned from and added to the trace by \avggetinput.

It is sometimes convenient (cf. \cref{sec:case-studies})
to let the formulas range over common auxiliary parameters~$\vec{x}$,
where
$\policy{\Dpre(\tr,\vec x)}{\Dpost(\tr,\vec x)}$,
abbreviates the slightly involved policy
$\policy{(\exists\ \vec x.\ \Dpre(\tr,\vec x))}
        {(\forall\ \vec x.\ \Dpre(\tr,\vec x) \implies \Dpost(\tr,\vec x))}$.
The intuitive reading is just that the condition may bind
some values that are later referred to by the release
which encodes a policy that
declassifies different information depending on a number of cases in $\Dpre$.

In order to track the trace~$\tr$ throughout the verification,
we extend the assertions by a designated abstract history predicate~$\History(\_)$,
and for the sake of presentation we also introduce
an explicit mechanism to extend this trace by an additional command~$\TRACE{e}$
(\changed{recall that }{}in \Tool this is instead realized as library annotations),
where expressions~$\tr$ and~$e$ denote a trace resp. event,
\begin{align*}
\text{assertion } P \ & ::= \
         \cdots \mid \History(\tr)
    &
\text{command } c \ & ::= \
         \cdots \mid \TRACE{e}
\end{align*}
\changed{}{where $\History(\tr)$ says that current value of
expression $\tr$ is the trace until now,
and $\TRACE{e}$ is a specification command that extends this trace by an additional event, denoted by expression~$e$.}

The purpose of an audit of a given verification with respect to a policy
is to inspect each $\ASSUME \rho$ statement placed in the program.
To that end, we need to refer to the \emph{verification context}
at that point, specifically the assertion/path condition~$P$
and trace~$\tr$ that occurs in the sub-derivation
$\prove{\ell}{P \star \History(\tr)}{\ASSUME \rho; \ldots}{\ldots}$
of that program part.
A policy $\D$ is honored if $P$ implies~$\Dpre$
and $\Dpost$ with~$P$ implies~$\rho$ at every such occurrence of assumptions.

In the same spirit as the rest of the paper, we show how the concern
of policy adherence can be separated out of the verification
of the program implementation (cf. comments at the end of \cref{sec:motivation})
in such a way that we can still draw a connection between all respective
constituents.
We define \emph{extended} judgements
\[ \audit{\ell}{P \star \History(\tr)}{c}{Q \star \History(\tr')}{A} \]
that now mention explicitly the history predicate
which is threaded through the proof alongside
all other assertions~\citep{ErnstKnappMurray2022,Schoepe_MS_20,blom2015history,penninckx2019specifying}.
Moreover, we instrument the \changed{derivations}{proof rules} to produce a set~$A$
of \emph{audit triples}~$(P,\tr,\rho)$ from each occurrence of an assumption
as the verification context mentioned above.

\begin{figure}[t]
    \begin{mathpar}
\infer
    {\ }
    {\audit{\ell}{\History(\tr)}{\TRACE{e}}{\History(\tr \cdot \List{e})}{\varnothing}}
    \textsc{Emit}

\infer
    {\ }
    {\audit{\ell}{P \star \History(\tr)}{\ASSUME \rho}{P \star \rho \star \History(\tr)}{\{(P,\tr,\rho)\}}}
    \textsc{Assume}

\infer
    {\audit{\ell}{P}{c_1}{Q}{A_1} \and
     \audit{\ell}{Q}{c_2}{R}{A_2}}
    {\audit{\ell}{P}{c_1; c_2}{R}{A_1 \cup A_2}}
     \textsc{Seq}

\infer
    {\audit{\ell}{P}{c}{Q}{A}
        \and \mathsf{mod}(c) \cap \fv{F} = \varnothing}
    {\audit{\ell}{P \star F}{c}{Q \star F}{\{(P \star F, \tr, \rho) \mid (P,\tr,\rho) \in A\}}}
     \textsc{Frame}
    \end{mathpar}
    \caption{Proof rules for event histories 
             and audit triples.}
    \label{fig:audit}
\end{figure}

Some interesting proof rules are shown in \cref{fig:audit}:
Emitting a trace event~$e$ symbolically extends the trace expression bound by history predicate~$\History(\_)$
to $\Concat{\tr}{\List{e}}$.
Assumptions \changed{record}{produce an audit triple that records} the current proof context $P,\tr$ alongside the assumed formula~$\rho$.
As an example for \changed{structural}{syntax-directed} rules, sequential composition merges
the results from both commands.
Rule \textsc{Frame} shows that any frame condition~$F$ that is preserved by the execution of~$c$ can be adjoined to the audit triples after the fact,
such that alternatively rule \textsc{Assume} could have been formulated as a ``small axiom'' with $\Emp$ instead of a general~$P$, i.e., framing is compatible with recording proof contexts.
\changed{leading to the notion of a \emph{policy audit}:}{Now we can formalize policy audit.}
\begin{definition}[Policy audit]
    \label{defn:audit}
A verification
$\audit{\ell}{P \star \History(\tr)}{c}{Q \star \History(\tr')}{A}$
is correctly audited wrt. a policy 
$\D(\tr) = \policy{\Dpre(\tr)}{\Dpost(\tr)}$
if for each $(P,\tr,\rho) \in A$
implications
$P \implies \Dpre(\tr)$
and $P \star \Dpost(\tr) \implies \rho$
are valid.
\end{definition}

\changed{Intuitively, audit triples are just gaps in the proof.
The conditions are therefore hardly surprising but
like
the policy-agnostic security guarantee we want to
ensure that the intuition is actually semantically justified.}
{Intuitively, audit triples are simply proof obligations
for every assumption to be justified by the declassification policy.
As with the policy-agnostic security guarantee,
we now provide a semantic guarantee that bridges between~$A$ from the calculus
and information release by policy.}
\newcommand{\trace}[1]{\mathsf{trace}(#1)}
We denote by $\trace{\sigma}$ the sequence of values~$e$ from
$\Trace{e}$ actions in the schedule~$\sigma$, defined as
$\trace{\sigma} \mathrel{\defeq} \List{e \mid \Trace{e} \in \sigma}$
with the intention that $\trace{\sigma}$ coincides with the evaluation of
the trace expression~$\tr$ in any post-state that asserts~$\History(tr)$.
Moreover, as $\Dpre(\_)$ can be regarded a formula of one variable,
say $\mathtt{tr}$,
we write $\sigma \models \Dpre$
when $\sem{\Dpre(\mathtt{tr})}{s}$ is true for state $s(\mathtt{tr}) = \trace{\sigma}$
(all other variables in~$s$ are irrelevant),
similarly, we write $\sigma,\sigma' \models \Dpost(\mathtt{tr})$
for $s,s' \models \Dpost(\mathtt{tr})$ and $s(\mathtt{tr}) = \trace{\sigma}$, $s'(\mathtt{tr}) = \trace{\sigma'}$.
We define counterparts to $\mathsf{assumption\text{-}failed}$
and $\mathsf{assumed\text{-}release}$ with respect to policies.
\begin{definition}[Policy exclusion]
    \label{defn:policy-excludes}
A declassification policy~$\D = \policy{\Dpre}{\Dpost}$
excludes a pair of schedules
(from the obligation to prove absence of leaks),
if after some number of steps~$n$ with $n < |\sigma|$ and $n < |\sigma'|$
the declassification condition is satisfied
but the release formula is not:
{\normalfont
\begin{align*}
& \mathsf{policy\text{-}excludes}_\ell(\policy{\Dpre}{\Dpost}, n, \sigma, \sigma')\quad \defeq \\
&  \quad \sigma\cut{n} \models \Dpre \text{ and } \sigma'\cut{n} \models \Dpre \text{ and }  \sigma\cut{n},{\sigma'}\!\!\cut{n} \not\models \Dpost
\end{align*}}
\end{definition}

\begin{definition}[Release by policy]
The initial states $(s',h')$ that an attacker
may remove from their uncertainty
by observing any further step after a prefix run with schedule~$\sigma$ 
are those minor runs that are excluded by policy~$\D$.
{\normalfont
\begin{align*}
& \policyrelease{\ell}{\D}{P}{\sigma}{c}{L}{s}{h} \ \defeq \\
& \quad \big\{~(s',h') \mid \exists\ \sigma'\ k'.\
\Steps{\Run{L}{c}{s'}{h'}}{\sigma'}{k'} \land \schedcteq{\ell}{\sigma}{\sigma'} \\
&   \hspace*{10mm} \land \exists\ n.\ n < |\sigma| \land n < |\sigma'| \land \mathsf{policy\text{-}excludes}(\D,n,\sigma,\sigma')~\big\}
\end{align*}}
\end{definition}

Finally, we can state the second main result:
\begin{theorem}[Policy-specific security guarantee]
    \label{thm:audit}
For a verified program
$\audit{\ell}{P \star \History(\List{})}{c}{Q \star \History(\tr')}{A}$
and a policy~$D$ formally audited according to \cref{defn:audit},
for each major run
$\Steps{\Run{L}{c}{s}{h}}{\sigma_1}{k_1}$
with final step
$\Step{k_1}{\sigma_2}{k_2}$ we have:
\begin{align*}
& \assumedrelease{\ell}{P}{\sigma_1}{c}{L}{s}{h} \\
{} \subseteq {} & \policyrelease{\ell}{\D}{P}{\sigma_1}{c}{L}{s}{h}
\end{align*}
\end{theorem}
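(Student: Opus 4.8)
The plan is to unfold both sides of the inclusion and, for a fixed element of the left-hand set, reuse the \emph{same} minor-run witness on the right; the only real work is to show that an assumption failure at a schedule position~$n$ forces a policy exclusion at the same position~$n$. So fix a state pair $(s',h')$ in $\assumedrelease{\ell}{P}{\sigma_1}{c}{L}{s}{h}$. By \cref{defn:release} there exist $\sigma'$ and $k'$ with $(s,h),(s',h') \models P \star \invs{L}$, with $\Steps{\Run{L}{c}{s'}{h'}}{\sigma'}{k'}$, with $\schedcteq{\ell}{\sigma_1}{\sigma'}$, and with some $n < |\sigma_1|$, $n < |\sigma'|$ at which an assumption failure occurs: $\sigma_{1,n} = \Assumption{s_n~\rho}$ and $\sigma'_n = \Assumption{s'_n~\rho}$ for a common formula~$\rho$, with $s_n, s'_n \not\models_\ell \rho$. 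Note that $\policyrelease{\ell}{\D}{P}{\sigma_1}{c}{L}{s}{h}$ does \emph{not} re-impose the precondition $P \star \invs{L}$ (and that the trailing step $\sigma_2$ in the statement plays no role, being present only to mirror \cref{thm:guarantee}). Hence it suffices to prove $\mathsf{policy\text{-}excludes}_\ell(\D, n, \sigma_1, \sigma')$, after which $\sigma'$, $k'$ and~$n$ witness membership of $(s',h')$ in the right-hand set.

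The key step is to connect this operational assumption action with a node of the audit set~$A$. Because the proof rules are matched, both runs execute at step~$n$ one and the same syntactic command $\ASSUME \rho$ inside~$c$. By the relational-soundness invariant underpinning \cref{thm:guarantee} --- which relates a matched pair of runs of~$c$ to its derivation $\audit{\ell}{P \star \History(\List{})}{c}{Q \star \History(\tr')}{A}$ --- this occurrence is verified in some context $\audit{\ell}{\hat{P}_n \star \History(\tr_n)}{\ASSUME \rho; \ldots}{\ldots}{A_n}$; consequently $(\hat{P}_n, \tr_n, \rho) \in A$ (rule \textsc{Assume} emits this triple and \textsc{Seq}, \textsc{Frame} propagate it to the top), and the pre-states of the two runs at step~$n$ satisfy the relational precondition, $(s_n, h_n),(s'_n, h'_n) \models \hat{P}_n \star \History(\tr_n)$ on the relevant subheaps. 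Moreover, by the history instrumentation (rule \textsc{Emit} and the stipulated coincidence of $\trace{\cdot}$ with the value bound by $\History(\cdot)$), the trace expression satisfies $\sem{\tr_n}{s_n} = \trace{{\sigma_1}\cut{n}}$ and $\sem{\tr_n}{s'_n} = \trace{\sigma'\cut{n}}$.

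It remains to discharge the three conjuncts of $\mathsf{policy\text{-}excludes}$ (\cref{defn:policy-excludes}) using the audit obligations for $(\hat{P}_n, \tr_n, \rho)$ from \cref{defn:audit}. Validity of $\hat{P}_n \implies \Dpre(\tr_n)$, applied to the pair $(s_n, h_n),(s'_n, h'_n)$, yields --- since $\Dpre$ is pure and hence must hold in each store --- both $\sem{\Dpre(\tr_n)}{s_n}$ and $\sem{\Dpre(\tr_n)}{s'_n}$; rewriting with the trace identities above gives ${\sigma_1}\cut{n} \models \Dpre$ and $\sigma'\cut{n} \models \Dpre$. For the third conjunct, suppose toward a contradiction that ${\sigma_1}\cut{n}, \sigma'\cut{n} \models \Dpost$, i.e.\ $s_n, s'_n \models_\ell \Dpost(\tr_n)$; as $\Dpost$ is pure, the pair then satisfies $\hat{P}_n \star \Dpost(\tr_n)$, so validity of $\hat{P}_n \star \Dpost(\tr_n) \implies \rho$ gives $s_n, s'_n \models_\ell \rho$, contradicting the assumption failure. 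Hence ${\sigma_1}\cut{n}, \sigma'\cut{n} \not\models \Dpost$, which together with the two $\Dpre$ facts is exactly $\mathsf{policy\text{-}excludes}_\ell(\D, n, \sigma_1, \sigma')$, completing the inclusion.

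The main obstacle is the middle step: rigorously establishing the correspondence between the operational $\Assumption{s_n~\rho}$ action at step~$n$ of a matched pair of runs and the node $(\hat{P}_n, \tr_n, \rho)$ in the whole-program audit set~$A$ --- in particular, that the concrete pair of pre-states at that step satisfies the \emph{recorded} precondition, with the correct framing of lock invariants released by intervening unlock operations (which may themselves carry the $\History(\cdot)$ resource, cf.\ \cref{eq:inv}) and of outer \textsc{Frame}s, and that the trace expression under $\History(\cdot)$ evaluates in each run to precisely the list of $\Trace{\cdot}$ events emitted along that run so far. This is proved by induction over the derivation (equivalently, over the operational semantics), synchronised with the soundness argument already used for \cref{thm:guarantee}; the delicate cases are parallel composition (matched scheduling, $\star$-splitting of both the state and the audit set) and locking (transfer of lock invariants, possibly including the history predicate). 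Once that correspondence is in hand, the remainder --- two applications of the audit entailments and reuse of the witness --- is routine.
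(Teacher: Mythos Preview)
Your argument is essentially the paper's: unfold both sides, reuse the minor-run witness, link the assumption action at step~$n$ to an audit triple, and discharge the three conjuncts of $\mathsf{policy\text{-}excludes}$ from the two audit entailments (the paper phrases the last part simply as ``contraposition of the second implication''). The paper factors what you call the ``main obstacle'' into two explicit auxiliary results, \cref{lem:secure} and \cref{lem:audit}, and then merely cites them; your inline description of that inductive correspondence is accurate.

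One small technical point to tighten. You fix the position~$n$ supplied by $\mathsf{assumed\text{-}release}$ and argue the correspondence (some $(\hat P_n,\tr_n,\rho)\in A$ with $\hat P_n$ actually holding at the pre-states and $\tr_n$ evaluating to the trace prefix) at that very~$n$. But the inductive $\mathsf{secure}$ predicate is allowed to bail out after the \emph{first} failed assumption: once some earlier step~$m<n$ already witnesses a violation, the recursion need not re-establish an intermediate assertion, so there is no guarantee that the recorded $\hat P_n$ is satisfied at step~$n$. The paper therefore invokes \cref{lem:secure} to \emph{locate} the first failure and works with that position (where alignment and the path condition are still known to hold), then applies \cref{lem:audit} there. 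Since $\mathsf{policy\text{-}release}$ only demands \emph{some} position, this is a one-line fix in your proof: replace ``the same~$n$'' by ``the first assumption failure $m\le n$ produced by the induction'', and the rest of your argument goes through unchanged.
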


Under the conditions of \cref{thm:audit}, 
we get 
(owing to \cref{thm:guarantee})
%
the ultimate property
that every knowledge increase is within the policy:
\begin{align*}
& \uncertainty{\ell}{P}{\sigma_1}{c}{L}{s}{h} \setminus
 \uncertainty{\ell}{P}{\Concat{\sigma_1}{\sigma_2}}{c}{L}{s}{h}  \\
 {} \subseteq {} & \policyrelease{\ell}{\D}{P}{\sigma_1}{c}{L}{s}{h}
\end{align*}

\section{Case Studies}\label{sec:examples}\label{sec:case-studies}

\changed{}{
We demonstrate the approach of this paper with several challenging case studies that we implemented and verified using auto-active verifier \Tool, an extension of \SecC
that adds constant-time security checks for memory access,
adapts the semantics of value classification as described in \cref{sec:logic},
and adds an \texttt{-audit} flag which shows the audit conditions from \cref{sec:audit} to the user. (In all case studies we inline the checks of these
audit obligations
into the verification, as explained \ifExtended at the end of \cref{app:motivation-proof}, \else in the extended version~\citep{EXTENDED}, \fi
so that \Tool discharges them automatically.) 
\Tool is implemented in the Scala programming language and encompasses roughly 5\,KLoC. Like \SecC, \Tool mechanises the application of our extended \SecCSL logic (i.e.,
the application of the rules in \cref{fig:rules} and \cref{fig:ioa}) by symbolic execution.
The case-studies are written in C with logical definitions and program annotations formulated in the specification language of \SecC.
\Tool inherits some limitations from \SecC:
It supports a significant fragment of C but lacks for example union types,
taking pointers to local variables, and floating point and bit-wise operations.
Numeric types are treated as unbounded mathematical integers (as is common
in auto-active verifiers)
so that the verification is not sound in the presence of overflows.
The absence of overflow can be proved in \Tool by adding
assertions on integer operations.
}

\changed{}{
  The formal soundness theorems \cref{thm:audit} and \cref{thm:guarantee},
  mechanised in Isabelle/HOL,
apply to the simple command language
formalised in this paper (whose semantics is given in \ifExtended\cref{fig:sem-cmd}\else\citep{EXTENDED}\fi). So \Tool's
soundness follows from those theorems, so long as \Tool correctly implements
the semantics of its subset of C and correctly implements the rules of
\cref{fig:rules} and \cref{fig:ioa}.
Aside from treating \texttt{int}s as unbounded integers we believe \Tool is
faithful to the formal program semantics and correctly implements the logic.
Both of these assumptions could in
principle be discharged by applying
orthogonal ideas on validating the output of auto-active verifiers~\citep{parthasarathy2021formally,jacobs2015featherweight}.
}

\begin{table}
  \begin{added}
\begin{tabular}{ p{2.2cm} p{1cm} p{1cm} p{1.5cm} p{1cm}   }
\toprule
{\bf Case Study} & {\bf Proof Ratio} & {\bf Verified SLOC} & {\bf Unverified SLOC} & {\bf Effort (pw)}\\
\cmidrule(r){1-1}
\cmidrule(lr){2-2}
\cmidrule(lr){3-3}
\cmidrule(lr){4-4}
\cmidrule(l){5-5}

Location Service & 1.9  & 210 & 124 & 3\\

Auction Server & 4.3 & 187 & 79 & 3.5\\

Wordle & 5.9 & 47 & 81 & 0.3\\

Private Learning & 2.5 & 315 & 114 & 6\\
\bottomrule
\end{tabular}
\end{added}
\caption{\changed{}{Case study statistics. We report the size of the case studies in
  Source Lines of Code (SLOC), including the size of the {\bf Verified} code;
  the {\bf Unverified} code; and the {\bf Proof Ratio}, the ratio of the size of
  the proof
  (definitions, lemmas, specifications etc. as \Tool annotations) to that of the verified code.
  We also report the total {\bf Effort} in person-weeks {\bf (pw)}.}\label{tbl:stats}}
  
\end{table}

\subsection{Differentially-Private Location Service}\label{sec:location}

Our first case study implements a multi-threaded,
privacy-preserving \emph{location service}. Such a service might run,
for instance, on  a user's mobile phone. Its intention is to release
information about the user's location, but in accordance with a privacy
policy that implements 
differential privacy~\citep{10.1007/11787006_1} for mobility
traces~\citep{10.1007/978-3-319-08506-7_2} (i.e.\ traces of reported locations
for the user).

\begin{figure}
  \centering\includegraphics[width=0.65\columnwidth]{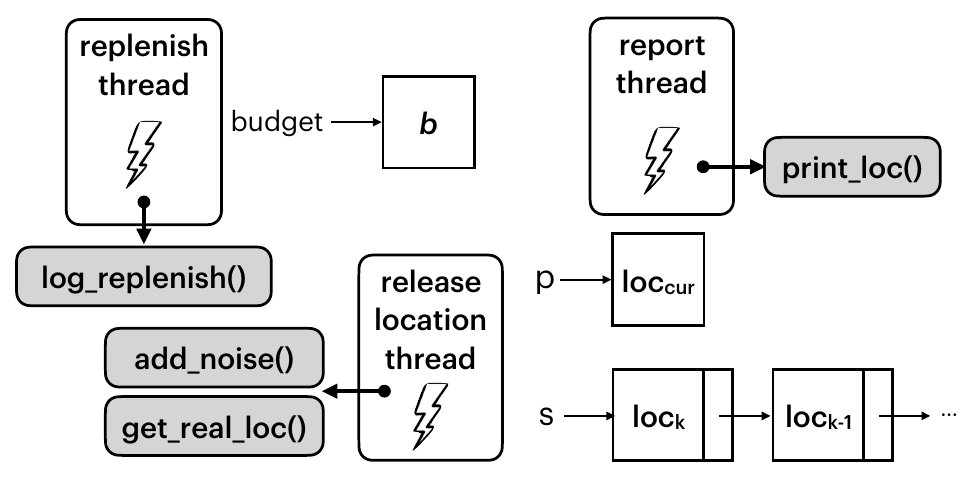}
  \caption{Architecture of the location service.\label{fig:location} External
  functions are coloured grey.}
\end{figure}

This service, whose architecture is depicted in \cref{fig:location}, contains three threads, that all run continuously:
the \emph{release location} thread periodically releases information
about the user's location, in accordance with a differential privacy
policy. It copies information about the user's most recent physical
location, obtained
via the external function \getrealloc, into
the heap location pointed to by \varp, after adding noise to ensure differential
privacy, using the external function \addnoise. This function is an
off-the-shelf implementation of the \emph{planar laplacian}~\citep{Geo_2013}
geolocation privacy mechanism. 
The state of the differential privacy
policy is recorded in the location pointed to by \budget. Finally, this
thread maintains a (linked) list~\vars carrying information about the user's
prior locations.

This linked list~\vars is used in situations where the privacy
budget has been exhausted. In particular, each time noise is added to the
user's real, private location via the external \addnoise function, some of
the user's privacy budget is consumed.
When the budget gets sufficiently low, no new information about the user's
location can be revealed without violating differential privacy. In this
situation, the \emph{release location} thread instead applies extrapolation
to \emph{predict}~\citep{10.1007/978-3-319-08506-7_2}
the user's most likely current location from the location
history recorded in the linked list \vars. Importantly, this history
contains only values
already previously released, i.e.\ noisy values resulting from previous
applications of the differential privacy mechanism \addnoise. Hence,
releasing a location prediction made from those values reveals no new
information, while still allowing the \emph{release location} thread to
provide continuous service. 

The \emph{report thread} periodically reports the user's most recent location
as recorded by the \emph{release location} thread in~\varp.
Finally, the \emph{replenish budget} thread periodically replenishes the
privacy budget \budget, allowing the \emph{release location} thread to again
apply the differential privacy mechanism to release new (noisy) location points
(rather than predictions). A global lock controls access to all shared data structures:
$\varp$, $\vars$ and $\budget$.

\myparagraph{Security Policy.}
The policy ensures that the
differential privacy mechanism is appropriately applied and that no
raw location data (to which the mechanism has not been applied) is ever
revealed. Specifically, the external function \printloc has a contract
that requires its argument is $\low$; yet the contract for
\getrealloc says that the returned location is $\high$
so that in
the absence of declassification, no location data can ever be revealed by
the service.

The declassification policy allows releasing location data only via the
correct use of the
differential privacy mechanism. Traces~$\tr$ in this example are sequences of events, each of which is either an
event~$\Consumed{\noiselat}{\noiselon}$, recording that the differential
privacy mechanism produced a noisy location point
$(\noiselat,\noiselon)$ and consuming a fixed positive amount~$\epsilon$
of the privacy budget; or the event $\Replenished$ recording that the
privacy budget was replenished by a fixed positive amount~$r$.
Both $\epsilon$ and~$r$ are public ($\low$)
constants that control
the strength of the privacy guarantee.

The contract for the external function $\addnoise$ that implements the
differential privacy mechanism is as follows. For brevity we elide,
via $\ldots$ in the precondition, that the argument~\texttt{pt} is a valid
pointer to a \texttt{struct point}, and freely intermix ASCII and
mathematical notation.
\begin{lstlisting}
  void add_noise(struct point *pt);
  _(requires %$\exists \tr$%. %$\History(\tr)$% ...)
  _(ensures  %$\exists \noiselat\ \noiselon.\ \History(\Concat{\tr}{\Consumed{\noiselat}{\noiselon}})$%)
  _(ensures  &pt->lat %$\mapsto \noiselat\  \star $% &pt->lon %$\mapsto \noiselon$%)
\end{lstlisting}
Whenever this function is called an appropriate event is
recorded in the trace to remember that some privacy budget was consumed.
%
Similarly, $\Replenished$ events are generated by the external function
\logreplenish, which is called by the \emph{replenish} thread when
replenishing the budget. 

The declassification policy~$\D_{\mathsf{loc}}(\tr,e)  = \varphi_{\mathsf{loc}}(\tr,e) \leadsto \rho_{\mathsf{loc}}(tr,e)$  then specifies that it is safe to declassify the value~$e$
only when~$e$ is the most recent point~$(\noiselat,\noiselon)$ generated by the
differential privacy mechanism \addnoise only when there is sufficient
privacy budget.
\[
\begin{array}{l}
\varphi_{\mathsf{loc}}(\tr,(\noiselat,\noiselon))\  \defeq\ 
\History(\tr) \ \star  \countbudget{\tr} \geq \epsilon \star {} \\ 
\hspace*{3cm} \exists\ tr'.\  tr = \Concat{\tr'}{\Consumed{\noiselat}{\noiselon}} \\\
\rho_{\mathsf{loc}}(\tr,(\noiselat,\noiselon))\ \defeq\  (\noiselat,\noiselon) :: \low
\end{array}
\]
The function $\countbudget$ (defined by straightforward recursion on $\tr$)
iterates through $\tr$ to count up the budget remaining
at the present time---consume events add~$-\epsilon$ while replenish events
add~$r$ to the budget, which starts at 0.  

\changed{}{This security policy demonstrates how our approach securely handles
  multiple (or repeated) declassifications in the same execution of a
  single program: each act of declassification is justified separately by
  appealing to the policy wrt. the trace $\tr$ at that point in time.}

\changed{}{It is important to note that this security policy does not
  verify that the differential privacy mechanism has been appropriately
  \emph{implemented}; instead it verifies that it is appropriately
  \emph{used} by the application. In other words, this policy does not
  directly state a differential privacy guarantee; however it ensures such a
  guarantee  under the assumption that the differential privacy mechanism
  correctly provides the privacy guarantee represented by the
  deplenishing budget. Verifying the mechanism implementation of course
  requires probabilistic reasoning and is best done using other approaches~\citep{barthe2016proving}. Similar arguments apply to the private learning
case study later on (\cref{sec:maclearning}).}

\myparagraph{Verification.}
Since \Tool does not currently support reasoning about floating
point arithmetic, for this case study such arithmetic (e.g.\ in the verified
extrapolation code) is modelled using integer arithmetic instead. This is sound
since the program invariant and security policy do not depend on any floating-point
arithmetic.
We prove the verified extrapolation code operates only on public values,
ruling out
timing channels from floating point operations~\citep{kohlbrenner2017effectiveness}. 
The verified part of this application comprises 210 source
lines of code (SLOC). Unverified code (124 SLOC) comprises 
the
external functions that allocate
memory, acquire and release the global lock, implement the planar laplacian
mechanism (72 SLOC), and simulate generating user location points, as well as the
\texttt{main()} function that sets up the threads.
The proof:code ratio (proof SLOC to verified code SLOC) is~1.9.
This effort to complete this case study was approximately 3~person-weeks \changed{}{(see \cref{tbl:stats})}.

\subsection{Sealed-Bid Auction Server}\label{sec:auction}

Our second case study is a \emph{sealed-bid} auction server. In such an
auction, all bids are kept secret until the auction is complete. This prevents
bidders racing to outbid one another. Our server
uses concurrency to ensure that no bidder can deny service to another
by servicing each client connection
in a separate thread.

Each client bid is handled
by a separate \emph{handle bid} thread. A separate \emph{close auction} thread
waits until the auction duration (a fixed, public parameter) has elapsed and
then closes the auction. Both make use of external logging functions: the
\emph{handle bid} thread logs incoming bids using the \logbid external function,
while the \emph{close auction} thread logs the fact that the auction has closed
using the \logclosed function. Logging is important in this case study 
to provide an audit trail (e.g.\ in case of a disputed auction). 
The \emph{close auction} thread uses the external \printresult function
to print out the result of the auction, once it has been closed.

When receiving a new bid, the \emph{handle bid} thread compares the
newly submitted bid to the current maximum in a constant-time fashion,
and updates the latter if the new bid is larger.
Bids are pairs $(\bidid,\bidqt)$ where $\bidid$ is the identity of the client
who submitted the bid and $\bidqt$ is the amount (or \emph{quote}) of the bid.

\myparagraph{Security Policy.}
The top-level verified function that implements the
\emph{handle bid} thread takes as its argument the bid to be handled. The
precondition on this function states that the bid is secret ($\high$). Thus
all bids are treated as secret. The
precondition on the external \printresult function requires that its argument
is public ($\low$). So the only way for a winner to be announced is via
declassification. \Tool's constant-time guarantee meanwhile ensures that no
information about bids can be leaked (including via timing channels) prior to
declassification.

The declassification policy states that no bid information can be declassified
until after the auction is closed. At this time the only bid that
can be declassified is the maximum bid that was received (i.e.\ the auction
winner). Hence, the policy allows only the winning bid to be revealed only
after the auction has closed, when the attacker learns the winning
bid and that no other bid was higher.

To capture this policy, the trace~$\tr$ records two kinds of events:
$\Running{\bidid}{\bidqt}$ represents the submission of a bid from the client
with id $\bidid$ and amount $\bidqt$ while the auction is running;
$\Finished$ represents that the auction has been closed.
Each is generated by one of the external logging functions: $\Running{\bidid}{\bidqt}$ is generated by the \logbid function, while $\Finished$ is generated by
\logclosed. In this way we piggy back on the application's normal functioning to
define its security policy.
The contracts for these external functions are similar to those for
\addnoise and \logreplenish from \cref{sec:location}.

The declarative, extensional declassification 
policy \linebreak $\D_{\mathsf{bid}}(\tr,e)\ \defeq\ \varphi_{\mathsf{bid}}(\tr,e) \leadsto \rho_{\mathsf{bid}}(\tr,e)$ is defined
as follows, where~$e$ is the value~$(\bidid,\bidqt)$ to be declassified.
\[
\begin{array}{l}
  \varphi_{\mathsf{bid}}(\tr,(\bidid,\bidqt))\ \defeq\ \History(\tr)\  \star \exists \tr'.\ \tr = \Concat{\tr'}{\Finished} \star {} \\
  \hspace*{5mm}\contains(\tr',\Running{\bidid}{\bidqt}) \star \ismax(\tr',\bidqt) \smallskip \\ 
  \rho_{\mathsf{bid}}(\tr,(\bidid,\bidqt))\ \defeq\ (\bidid,\bidqt) :: \low
\end{array}
\]
where $\contains(xs,y)$ is the standard list function for testing whether list~$xs$
contains the element~$y$, and $\ismax(\tr',\bidqt)$ checks that
$\tr'$ contains only events $\Running{\bidid'}{\bidqt'}$ for which
$\bidqt \geq \bidqt'$ and is defined via recursion on $\tr'$.

\myparagraph{Policy Composition.}
To evaluate our approach's ability to handle the \emph{composition} of multiple
security policies, we decided to extend the example and
augment its security policy. Specifically, we added a
\emph{reserve price} feature to the auction server. When run in this mode, the
user supplies a (secret)
reserve price and in order for a winner to be declared, there
must be a bid that is greater-or-equal to this reserve. We parameterise our
verification by an arbitrary, constant reserve price~$r$  and use the abstract
separation logic predicate $\Reserve(r)$ to denote that the auction is running
in the reserve price mode and that $r$ is the reserve
price.

In this mode, all bidders learn whether any bid was $\geq$ the reserve
since, if it was not, no winner is announced. Therefore the 
policy allows this (boolean) fact~$\met$ to be declassified unconditionally (only) once the
auction is closed. 
\[
\begin{array}{l}
  \varphi_{\mathsf{met}}(\tr,\met)\ \defeq\ \History(\tr) \star \exists\ \tr', r.\ \Reserve(r)  \star \tr = \Concat{\tr'}{\Finished} \star {} \\ 
  \hspace*{4cm} \met = \resmet(\tr',r) \smallskip \\[4pt]
  \rho_{\mathsf{met}}(\tr,\met)\ \defeq\ \met :: \low
\end{array}
\]
$\resmet(\tr',r)$ is a simple recursive function that iterates through $\tr'$
returning $\trueconst$ as soon as it finds a bid $(\bidid,\bidqt)$ for which
$\bidqt \geq r$, or $\falseconst$ if none is found.
The declassification policy for the winning bid is then specified as follows,
using $\varphi_{\mathsf{bid}}$ and $\rho_{\mathsf{bid}}$ defined for the non-reserve mode earlier.
\[
\begin{array}{l}
  \varphi_{\mathsf{resbid}}(\tr,(\bidid,\bidqt))\ \defeq\ \History(\tr) \star \Reserve(r) \ \star \bidqt \geq r \star {} \\ 
  \hspace*{4cm} \varphi_{\mathsf{bid}}(\tr,(\bidid,\bidqt)) \smallskip \\[4pt]
  \rho_{\mathsf{resbid}}(\tr,(\bidid,\bidqt))\ \defeq\ \rho_{\mathsf{bid}}(\tr,(\bidid,\bidqt))
\end{array}
\]
The \changed{}{composed} declassification policy is of course the non-overlapping disjunction of
\addition{the mutually-exclusive predicates}
$\D_{\mathsf{met}}$ and $\D_{\mathsf{resbid}}$, depending on the type of
the second argument.

\myparagraph{Verification.}\label{sec:auction-verification}
%
%
The total size of the verified code for this application is 187 SLOC.
Unverified code (79 SLOC) comprises
external functions that allocate
memory, acquire and release the global lock, implement logging and printing,
as well as the code that reads from client socket connections, and the
\texttt{main()} function that sets up the threads and the TCP listen socket.
The size of the verified artifact for this case study is
985 source lines, making the proof:code
ratio 4.3. This ratio is higher than previous 
because this case study
involves a lot of meta-level reasoning by induction about the policy
itself. 
%
This case-study required approximately 3 person-weeks of effort; adding
the reserve price feature added 4 person-days \changed{}{(see~\cref{tbl:stats})}.

\subsection{Wordle}
\label{sec:wordle}

Our third case study is a constant-time implementation of the popular game Wordle.
The implementation is a simple server that allows players to connect and to
guess a pre-chosen 5-letter
word~$w$. In response to a player submitting her guess~$g$, the
server replies with a 5-byte response~$r$. The $i$th byte~$r_i$
of the response
provides information about the $i$th letter~$g_i$ of the guess in relation to
the pre-chosen word~$w$: 0 (black) indicates that $g_i$ is not present anywhere
in~$w$; 1 (yellow) that it is present in~$w$ at some index~$j$ for which $g_j \not= w_j$; 2 (green) that $g_i = w_i$.
The server runs a separate thread to service each client connection.

\myparagraph{Security Policy.}
The security policy says that the pre-chosen word~$w$
is secret: $w :: \high$. Each player~$p$ is assigned a distinct
security level~$\ell_p$. A  player's guess~$g$ is known only to herself:
$g :: \ell_p$, encoded in the postcondition of the external
library function that retrieves the player's guess. 
%
The external function that transmits the server's response
back to the player requires in its precondition
that the response~$r$ is allowed to be known to the player: $r :: \ell_p$.
Since~$r$ is a function of~$w$, this requires declassification.

The policy condition~$\varphi_{\mathsf{word}}(\tr)$ for the declassification policy
therefore requires that the player has submitted a most recent
guess~$g$, for the
pre-chosen word~$w$, which is encoded in the trace by appropriate
events that record together for each submitted guess~$g$ the player~$p$ who submitted
it as well as the guess~$g$ itself. An abstract separation logic predicate
is used to remember which is the pre-chosen word~$w$ (similar to the
$\Reserve$ predicate of \cref{sec:auction})

The policy release formula~$\rho_{\mathsf{word}}(\tr)$ is more interesting.
It specifies what information player~$p$ (with security level~$\ell_p$)
is allowed to learn after submitting guess~$g$ for the pre-chosen word~$w$,
and requires that this information is not revealed to anyone else.
Its core is the following:
\newcommand{\ccontains}{\mathsf{ccontains}}
\[
\begin{array}{l}
  \forall i. i \haslabel \low \implies  (i \geq 0 \land i < \length(w) \implies (w_i = g_i) :: \ell_p) \ \star  {}\\
  \hspace*{5mm} \ccontains(w,g,g_i,\length(w)) :: (w_i \not= g_i ? \ell_p : \high)
\end{array}
\]
The first conjunct says the player is allowed to learn whether each
letter~$g_i$ of the guess is equal to the corresponding letter of the
word~$w_i$. The second conjunct says that additionally, if $g_i \not= w_i$,
then the player is allowed
to learn whether~$g_i$ is contained elsewhere
in the word at some location~$j$ for which $w_j \not= g_j$. That is the
result returned by the function~$\ccontains(w,g,g_i,\length(w))$ which is defined
by straightforward recursion on the length of the word. Specifically,
$\ccontains(w,g,c,n)$ returns true if $c$ is contained in the first~$n$
characters of~$w$ at some location~$j$ where $w_j \not= c$.

Note that this security guarantee ensures the server does not leak
information in its timing behaviour about the player's guess, which might
otherwise be exploited by other players to draw extra inferences about the
word~$w$ beyond what they could deduce from their guesses alone.

\changed{}{\cref{tbl:stats} reports the size and effort for this case study.
  The significantly higher proof:verified code ratio is because these proofs
  contain a large amount of generic meta-level reasoning (e.g.\ about lists
  and strings, etc.) required for this case study.}

\subsection{Private Learning}
\label{sec:maclearning}
Our fourth case study investigates the application of our ideas to
secure, private learning. We consider a scenario in which a client
wishes to compute a model, possibly in collaboration with others,
over very sensitive data (e.g.\ parental income, race, and gender to 
predict earnings distributions, incarceration rates \citep{chetty2014land, chetty2018opportunity, chetty2019practical}, 
survival prediction of lung cancer patients \citep{deist2020distributed}, etc.). It is common for such models to be computed in hardware-supported
\emph{secure enclaves}~\citep{hunt2018chiron, hynes2018efficient, kunkel2019tensorscone, mo2021ppfl, TB19a}
provided by trusted execution environments like
Intel SGX \citep{10.1145/2487726.2488368} 
and ARM TrustZone \citep{alves2004trustzone}, to defend against data theft
including against the host operating system. Here, \Tool's constant-time
guarantee is especially relevant, given that TEEs are known to leak data
via various side-channels \citep{10.5555/3154768.3154779, lee2017inferring, moghimi2017cachezoom, moghimi2020copycat, puddu2020frontal};
 enforcing constant-time ensures side-channels cannot be exploited.

 In this case study, a client is invoked with initial model parameters
 $\theta^0$. It runs $\mathsf{T}$ training iterations.
 At each iteration~$t$ ($1 \leq t \leq \mathsf{T}$) it
 refines the model parameters, producing new ones~$\theta^{t+1}$. It does
 so by applying differentially-private gradient descent~\citep{abadi2016deep} (DP-GD), in which a noisy gradient against the model's
 loss function is computed and
 then used to refine the model parameters. The goal is to ensure that the
 refined model~$\theta^{\mathsf{T}+1}$ does not leak too much information
 about the sensitive training data. For simplicity,
 our current implementation
 learns a linear model over the training data. We refer to the process
 in which $\theta^{\mathsf{T}+1}$ is computed from $\theta^0$ as a
 training \emph{epoch}, comprising $\mathsf{T}$ training iterations.
 
 Each training iteration consumes~$\epsilon$ privacy budget; by composition,
 each epoch consumes $\mathsf{T}\cdot\epsilon$. As in the
 location service case study (\cref{sec:location}), a separate thread may
 periodically replenish that budget, if desired. 
 This design allows the client to be deployed in a federated learning
 setup in which clients periodically
 provide their updated model parameters~$\theta^{\mathsf{T}+1}$ to a central server, which then e.g.\ computes the average across all client models,
 before sending that average back to each client to
 use as~$\theta^0$ for a subsequent training epoch.
 Distrusting clients can thus compute a shared model without revealing
 their sensitive data to each other, nor the server.

\myparagraph{Security Policy.}
The initial model parameters~$\theta^0$ are \low, but the
client's training data over which the updated parameters~$\theta^{\mathsf{T}+1}$ are computed
are \high. The updated parameters are
required to be \low, and so must be declassified.

Similarly to the location service (\cref{sec:location}), traces record
events to remember when $\epsilon$ privacy budget is consumed (on each
training iteration) and when the budget is replenished. They also
record events to remember when initial model parameters are
received by the client at the start of each epoch, and when updated
model parameters are released by the client at the end. 
Hence the declarative policy says that the updated
model parameters can be declassified only when they have been correctly
computed ($\mathsf{T}$ training iterations have occurred in the most recent
epoch), for which there
was sufficient privacy budget~$\mathsf{T}\cdot\epsilon$
available before the epoch began.

\myparagraph{Verification.}
The verified part of this case study comprises 315 SLOC, and 114 unverified SLOC
whose functionality is similar to the prior case studies.
The total verified artifact comprises 1108 source lines, yielding a
proofs:code ratio of 2.5. No effort was made to optimise
this ratio and indeed these proofs contain a certain amount of duplicated
lemmas from other case studies.

\section{Conclusion and Related Work}
\label{sec:related}\label{sec:conclusion}

We presented a principled methodology for proving secure
declassification for non-trivial, concurrent, programs.
We decompose the problem into (a)~proving that the program
only leaks information it has explicitly declassified (via \ASSUME statements);
and (b)~auditing the declassifications against a declarative security
policy~$\D$ to ensure that all leaks accord with the policy.
We provide a sound program logic,
supported by the auto-active verifier~\Tool and applied it to
reason about the implementations of various case studies on the order of hundreds of
source lines of code.

In practice, one can of course choose to inline the policy audit (\cref{defn:audit}) into the verification \changed{}{(this is illustrated \ifExtended at the end of
\cref{app:motivation-proof}\else in \citep{EXTENDED}\fi)},
or alternatively represent the declassification step that appeals
to a policy by a specification-only procedure
with precondition $\Dpre$ and postcondition~$\Dpost$;
or alternatively to place the respective audit conditions into the code.
By disentangling contributions~(a) and~(b) in our formal development
we contribute a justification for this kind of reasoning with respect to a
semantic characterization of attacker knowledge~\citep{chudnov2018assuming}.
Similarly, our ideas are not necessarily tied to the presentation as an extension of the specific foundation \SecCSL.
With the appropriate care to semantic variations (e.g. timing-sensitivity),
we think it is feasible to adapt the approach to other foundations like
modular product programs~\citep{Eilers2018} as implemented in Viper.

Prior work on practical secure declassification includes
the verification of the
kernel of a conference management system~\citep{popescu2021cocon}, a
social media platform~\citep{bauereiss2018cosmed} and its distributed successor~\citep{Bauereiss_GPR_17}.
These works proved variants of the generic security property of
Bounded Deducibility~\citep{popescu2021bounded}, which is similar to
declassification policies~$\D$. The
proofs use manual unwinding in Isabelle/HOL,
over an abstract program representation of I/O automata. 
Li et al.~\citep{Li_LGNH_21}
verified secure declassification policies while 
verifying a 3.8K SLOC Linux KVM hypervisor, in the proof assistant Coq.
Their policies were encoded non-declaratively by
artificially modifying the semantic model
to replace declassified sensitive data with
non-sensitive data, allowing declassification to be proved in
terms of standard noninterference. 

\citet{banerjee2008expressive} enhance the knowledge-based security property of~\citet{Askarov_Sabelfeld_07} with relational assumptions (not using that term), and propose enforcement using a security type system together with relational verification of the declassifying code.  Their declassification policies combine the assumption with an assertion, which should refer to ghost state modeling external observations.  Their formalization is for deterministic sequential programs and does not include the requisite relational logic.
\changed{}{We show the approach can be applied to concurrent programs as well.
We decouple meaning of assume and meaning of policies  (cf.~their Def 5.5), such that assume statements have meaning independently of a stated policy.
Our proof system (\cref{sec:logic}) and audits (\cref{sec:audit}) provide a way to formally establish the requirements outlined by their Definition 6.2 points 2 and 3.
Our explicit trace predicate $\History(\tr)$ realizes their suggested ghost state.
By contrast with their suggestion to use a type system for some relational reasoning, we use 
only the proof system, which encompasses relational reasoning.
}



\changed{}{
\citet{BalliuDG11} observe that knowledge-based properties like these are closely related to standard semantics of epistemic logic, and show how several properties from the literature can be expressed in epistemic temporal logic (but this work does not address verification of such properties, nor concurrent programs).}

\changed{}{\citet{AskarovCM15} formulate knowledge-based security for monitoring of concurrent programs with synchronization in the form of barriers; their monitor is hybrid in the sense that it relies on an oracle for static analysis of branches not taken.  
Compared with a logic or static analysis,
monitoring has the advantage that it can allow use of a program under conditions when its execution is secure, even if the program is not secure in general. 
Monitoring has the disadvantage of significant runtime overhead.
Owing to nuanced use of rely-guarantee reasoning and annotations that designates assumptions a thread makes about locality of shared variables
(adapted from~\citet{Mantel_SS_11}), their monitor is factored into local and global parts and avoids the need for additional synchronization.
}

\changed{}{There is an extensive literature on verification of constant-time security properties;
a recent example is \citet{ShivakumarBGLP22} which also addresses the role of compilers 
in mitigation.  Language based mitigations of timing channels have been studied since \citet{Russo08}.
}

\changed{}{There is also an extensive literature on information flow for concurrent programs.
Prior to the emergence of knowledge-based formulations many variations were based on specialized bisimulations (e.g.,~\citet{SabelfeldS00}).  There are tradeoffs between permissiveness and compositionality of the different properties
(see e.g.,~\citet{Mantel_SS_11}),
and differing models are of interest depending on adversary models.
Surprisingly the property of \citet{SabelfeldS00} is decidable~\citep{Dam06} provided the data model is sufficiently simple for the expression language to be decidable.
}


The \Veronica logic proves secure declassification for shared-memory concurrent
programs~\citep{Schoepe_MS_20}.
\changed{\Veronica supports some
reasoning about programs that might branch on secrets, whereas we prove
a constant-time guarantee which forbids it. Inspired by \Veronica, our
security policies~$\D$ are also defined over external program traces.}{
Its security property is also formulated
as a knowledge-based one. It is more permissive than our constant-time
property in that it can tolerate some secret-dependent branches. However,
to avoid occlusion anomalies~\citep{Sabelfeld_Sands_09},
such branching is disallowed for secrets involved in declassification.
}

\Veronica supports only \emph{unary} (non-relational) assertions,
the entire logic is designed around and fundamentally tied to this principle.
Lacking relational assertions like $e \haslabel \ell$, however,
limits expressiveness and precludes scalability. 
Instead of writing $e \haslabel \low$ for example,
in \Veronica one has to precisely specify the value of expression~$e$
and where it was sourced from,
e.g. $e = x + 5 \land x = \mathit{Low\_Inputs}[3]$ would say that $e$ is the sum of the third input obtained from a low source and the constant five.
Writing invariants (the hard part of verification) in this style quickly becomes impractical, notably for advanced concepts like pointer structures.
For that reason \Veronica is not adequate for programs over arrays or pointers
(none of their examples uses them).
Lack of relational assertions also
means that \Veronica cannot encode declassification policies
like that of the Wordle case study.
From a more practical perspective, \Veronica is not implemented
in a dedicated auto-active verifier like our tool, \Tool.
Overall, our case studies from \cref{sec:case-studies}
are far beyond the scope what can reasonably be verified in \Veronica
and this assessment has been confirmed by \citet{Schoepe_MS_20} in personal communication.


Smith enforces a form of secure declassification
called Qualified Release via so-called \emph{declassification predicates}~\citep{Smith_22}.
Declassification occurs via dedicated
\textbf{declassify} statements, annotated by
unary predicates~$P(e)$ over the value~$e$ to be declassified that are
evaluated in the program's \emph{initial} state. This
notion is soundly enforced in a security type system,
encoded in the auto-active verifier Dafny, and applied to
programs of a few SLOC each against simple policies. 

Our work highlights the difficulty of proving strong constant-time guarantees for
intentionally-leaky
application code: such reasoning necessarily treats
implementation concerns and so cannot be performed
on an abstract model alone. It would be interesting therefore to extend existing constant-time
programming languages~\citep{cauligi2019fact} with support for rich security policies and declassification.

\begin{acks}
We sincerely thank the anonymous reviewers for their comments and insightful suggestions
that enabled us to improve this paper.

This research was sponsored by the \grantsponsor{GSONR}{U.S. Department of the Navy, Office of Naval
Research}{https://www.nre.navy.mil/}, under award \grantnum{GSONR}{N62909-18-1-2049}. Any opinions, findings, and conclusions or recommendations expressed in this material are those of the author(s)
and do not necessarily reflect the views of the Office of Naval Research.

This material is based upon work supported by the \grantsponsor{NGTF}{Commonwealth of Australia Defence Science
and Technology Group, Next Generation Technologies Fund (NGTF)}.

Naumann was supported in part by \grantsponsor{GSNSF}{NSF}{https://www.nsf.gov} award \grantnum{GSNSF}{CNS-1718713}.

\end{acks}

\bibliographystyle{ACM-Reference-Format}

\bibliography{references}

\ifExtended
\clearpage

\appendix

\begin{added}
\section{Proof of the Motivating Example}\label{app:motivation-proof}

Recall (\cref{sec:motivation}) that the example of
\cref{fig:motivation} we verify by defining a resource invariant that
links the input/output history $\tr$ of the program and its
state, to which the pointer \texttt{struct avg\_state * st} points:
\begin{align*}
\inv{\tr,\texttt{st}}
    \quad \defeq \quad
        \History(\tr)    & \land \texttt{st->count} = \length(\tr) \\
                {} & \land \texttt{st->sum} = \sumfunc(\tr)
\end{align*}

Recall that we attach this resource invariant to the lock,
which is implemented by the \texttt{avg\_lock()} and
\texttt{avg\_unlock()} functions that, respectively,
soundly acquire and release this invariant~\citep{Ernst_Murray_19}
by ensuring that it always holds whenever the shared state~\texttt{st}
is accessed.
This we specify via standard annotations as follows
(recalling that \texttt{\bfseries result} in a postcondition
refers to the function's return value):

\begin{lstlisting}
struct avg_state * avg_lock();
_(ensures %$\inv{tr,\texttt{\bfseries result}}$%)

void avg_unlock(struct avg_state *st);
_(requires %$\inv{tr,\mathtt{st}}$%)
\end{lstlisting}

\newcommand{\prooftext}[1]{\textcolor{purple}{#1}}
\newcommand{\proofstate}[1]{\prooftext{\{#1\}}}

Recall that the declassification policy~$\D(\tr)$ for this
example is:
\begin{align*}
\D(\tr) \quad \defeq \quad \policy{\length(\tr) \ge 6~}{~\sumfunc(\tr)/\length(\tr) \haslabel \low}
\end{align*}
The average of the inputs can be declassified so long as there are at least
6 inputs.

Then the proof appears in \cref{fig:motivation-proof}. The proof is carried
out using the rules of our logic (\cref{sec:logic}), automated by \Tool; intermediate proof
states we annotate in \prooftext{purple}, so the reader can see how the
proof progresses. Doing so proves the policy-agnostic
security guarantee: the program leaks no more information than that
contained in assume statements. To prove all leakage is in accordance with the
declassification policy, we apply the rules of \cref{sec:audit} to collect
audit obligations. In this case, we need to prove that the underlined path
condition in \cref{fig:motivation-proof} (what is known at the time of
the assumption) is sufficient to justify the assumption against the
declassification policy. Specifically, letting~$P$ be the underlined path
condition, we have to check (\cref{defn:audit}) that $P$ implies the
policy condition
$\length(\tr) \ge 6$ and that $P$ and the policy release formula
$\sumfunc(\tr)/\length(\tr) \haslabel \low$ together imply the
assumption $\texttt{avg} :: \texttt{\bfseries low}$. These trivially hold,
thus the program satisfies the secure declassification against
this declassification policy: the program leaks no more information than
that allowed by its declassification policy (by \cref{thm:audit}).

In practice, (see \cref{sec:conclusion}) we often inline the check of the
audit obligation (\cref{defn:audit}) into the proof, so it can be automatically
discharged by \Tool. This can be done for instance by replacing
the line  \texttt{\_({\bfseries assume} avg :: {\bfseries low})} in \cref{fig:motivation-proof} with the following three:

\begin{lstlisting}
  _(assert %$\length(\tr) \ge 6$%) // check that %$P \implies \Dpre$%
  _(assume %$\sumfunc(\tr)/\length(\tr) \haslabel \low$%) // assume, therefore %$\Dpost$%
  _(assert avg :: low) // check that %$P \star \Dpost \implies \rho$%
\end{lstlisting}

We explain this transformation for an arbitrary declassification
policy $\policy{\Dpre}{\Dpost}$ whose condition is $\Dpre$ and
release formula is $\Dpost$, and for assumption~\texttt{\_({\bfseries assume} $\rho$)},
as indicated in the comments.
The first line checks that the policy condition~$\Dpre$ holds,
under the current path condition (called $P$ in \cref{fig:motivation-proof}).
This is the first check of \cref{defn:audit}.
Having proved that $\Dpre$ holds, the second line then makes use of the
policy release formula~$\Dpost$. The path condition after the
second line is therefore $P \star \Dpost$. Thus the fourth line
then checks that the original assumption (in this case $\texttt{avg} \haslabel \texttt{\bfseries low}$)
holds, i.e., writing $\rho$ for this assumption, that
$P \star \Dpost \implies \rho$, the second check of \cref{defn:audit}.

\begin{figure*}
  \begin{added}
\begin{lstlisting}
void avg_sum_thread() {
  while(true) {
    struct avg_state * st = avg_lock();
    %$\proofstate{\inv{tr,\mathtt{st}}}$%
    %$\proofstate{\History(\tr) \land \texttt{st->count} = \length(\tr) \land \texttt{st->sum} = \sumfunc(\tr)}$%
    int i = avg_get_input();
    %$\proofstate{\History(\Concat{\tr}{i}) \land \texttt{st->count} = \length(\tr) \land \texttt{st->sum} = \sumfunc(\tr)}$%    
    st->count += 1;
    %$\proofstate{\History(\Concat{\tr}{i}) \land \texttt{st->count} = \length(\Concat{tr}{i}) \land \texttt{st->sum} = \sumfunc(\tr)}$%        
    st->sum += i;
    %$\proofstate{\History(\Concat{\tr}{i}) \land \texttt{st->count} = \length(\Concat{tr}{i}) \land \texttt{st->sum} = \sumfunc(\Concat{\tr}{i})}$%
    %$\proofstate{\inv{\Concat{tr}{i},\mathtt{st}}}$%    
    avg_unlock(st); 
  }
}

void avg_declass_thread() {
  struct avg_state * st = avg_lock();
  %$\proofstate{\inv{tr,\mathtt{st}}}$%
  %$\proofstate{\History(\tr) \land \texttt{st->count} = \length(\tr) \land \texttt{st->sum} = \sumfunc(\tr)}$%
  if (st->count >= 6) {
    %$\proofstate{\History(\tr) \land \texttt{st->count} = \length(\tr) \land \texttt{st->sum} = \sumfunc(\tr) \land \sumfunc(\tr) \geq 6}$%    
    int avg = st->sum / st->count;
    %$\proofstate{\History(\tr) \land \underline{\texttt{st->count} = \length(\tr) \land \texttt{st->sum} = \sumfunc(\tr) \land \sumfunc(\tr) \geq 6 \land \texttt{avg} = \sumfunc(\tr) / \length(\tr)}}$%
    _(assume avg :: low) %$\hfill \prooftext{\triangleright \{(P,\tr,\texttt{avg} :: \texttt{\bfseries low})\}}$%
    %$\proofstate{\History(\tr) \land \texttt{st->count} = \length(\tr) \land \texttt{st->sum} = \sumfunc(\tr) \land \sumfunc(\tr) \geq 6 \land \texttt{avg} = \sumfunc(\tr) / \length(\tr) \land \texttt{avg} :: \texttt{\bfseries low}}$%
    print_average(avg);
    %$\proofstate{\History(\tr) \land \texttt{st->count} = \length(\tr) \land \texttt{st->sum} = \sumfunc(\tr) \land \sumfunc(\tr) \geq 6 \land \texttt{avg} = \sumfunc(\tr) / \length(\tr) \land \texttt{avg} :: \texttt{\bfseries low}}$%    
  }
  %$\proofstate{\History(\tr) \land \texttt{st->count} = \length(\tr) \land \texttt{st->sum} = \sumfunc(\tr)}$%
  %$\proofstate{\inv{tr,\mathtt{st}}}$%
  avg_unlock(st);
}
\end{lstlisting}
\end{added}

\caption{\changed{}{Proof of the example in \cref{fig:motivation}. The use of the
  assume statement induces an audit triple
  $(P,\tr,\texttt{avg :: \texttt{\bfseries low}})$, where $P$ is the
  underlined
  path condition at the point of the assume statement, namely
  $\texttt{st->count} = \length(\tr) \land \texttt{st->sum} = \sumfunc(\tr) \land \sumfunc(\tr) \geq 6 \land \texttt{avg} = \sumfunc(\tr) / \length(\tr)$.
  The declassification
  policy $\policy{\Dpre}{\Dpost}$ is honored (\cref{defn:audit}) if
  $P$ implies~$\Dpre$
  and 
  and $P \star \Dpost$ implies the assumption $\texttt{avg} :: \texttt{\bfseries low}$.
  The policy for this example recall is
  $\Dpre(\tr)\  \defeq\ \length(\tr) \ge 6$, and
  $\Dpost(\tr)\  \defeq\  \sumfunc(\tr)/\length(\tr) \haslabel \low$.
  $P$ clearly implies $\Dpre$; moreover, so does
  $P \star \Dpost$ imply $\texttt{avg} :: \texttt{\bfseries low}$.
  Thus the example is secure against the declassification policy,
  by \cref{thm:audit}.}\label{fig:motivation-proof}}
\end{figure*}

\end{added}

\section{Proofs of the Main Theorems}\label{sec:proofs}

\begin{added}
The proofs of the main theorems \cref{thm:guarantee}
and \cref{thm:audit}
are expressed with respect to an inductive generalization
that captures all necessary properties of two executions running in lockstep.
The respective soundness proofs will therefore rely on an intermediate result
that precisely characterizes how a major run is related to any minor run.


\begin{definition}[Aligned actions and schedules]
    \label{defn:aligned}
Two actions~$a$ and~$a'$ are \emph{aligned} wrt. a security level~$\ell$,
written $a \cong_\ell a'$, if one of the listed cases applies.
\changed{Two schedules of the same length are aligned, written $\sigma \cong_\ell \sigma'$,
if their actions are point-wise aligned.}{Two schedules are aligned, written $\sigma \cong_\ell \sigma'$,
if they have the same length and their actions are point-wise aligned.}
{\normalfont
\begin{align*}
\tau &\cong_\ell \tau
    \rlap{\qquad $\Fst \cong_\ell \Fst$}
    &
\Snd &\cong_\ell \Snd
    &
\Load{p} &\cong_\ell \Load{p} \\
\Assumption{s~\rho} &\cong_\ell \Assumption{s'~\rho}
    &
\Trace{e} &\cong_\ell \Trace{e'}
    &
\Store{p} &\cong_\ell \Store{p} \\[4pt]
\Out{\ell'}{v} &\cong_\ell
    \rlap{$\Out{\ell'}{v'}$ \quad if $\ell \sqsubseteq \ell' \implies v = v'$}
\end{align*}}
\end{definition}
Aligned schedules capture
that the type of events and formula~$\rho$ for assumptions matches per step,
and that an attacker cannot learn information form memory access
and from outputs (equality of pointers $p$ as well as values~$v$, $v'$).
Note that the condition is strictly stronger than observably equivalent schedules $\sigma \approx_\ell \sigma'$ (\cref{defn:equivalent}),
specifically, it enforces that the event type is always the same
even for unobservable events
and that assumption steps are paired with the \emph{same} assumed formula~$\rho$.

Soundness is characterized
with the help of an inductive predicate $\secure{\ell}{n}{P_1,\tr_1,c,Q,\tr,A}$
which states that the program is safe to execute, correct, and secure for~$n$ steps
similar to \citep[Def.~3]{Ernst_Murray_19} for \SecCSL and its
non-relational counterpart from \citep{vafeiadismfps11} for standard CSL.
In comparison to \citep{Ernst_Murray_19} it \emph{additionally}
tracks alignment between possible schedules of the execution of~$c$,
injects assumption steps into intermediate path conditions,
tracks the history trace of events, and collects audit triples in~$A$.
As such, it encodes all consequences of extended judgements
$\audit{\ell}{P_1 \star \History(\tr_1)}{c}{Q\star \History(tr)}{A}$
to prove \cref{thm:guarantee,thm:audit}
but while this judgement is defined compositionally over the \emph{structure of the program command~$c$},
predicate $\secure{\ell}{n}{P_1,\tr_1,c,Q,\tr,A}$ unwinds individual \emph{execution steps} linearly.
This is the key gap that is bridged in the soundness proof.

\begin{definition}[Secure Executions]
Predicate $\mathsf{secure}$ is defined recursively over the counter~$n$ of remaining steps to assert secure:
\begin{itemize}
\item $\secure{\ell}{0}{P_1,\tr_1,c,Q,\tr,A}$ holds always, i.e., a program is secure for zero steps.
\item $\secure{\ell}{n+1}{P_1,\tr_1,c,Q,\tr,A}$ holds, if for all possible pairs of first steps $\Run{L_1}{c_1}{s_1}{h_1} \xrightarrow{\sigma_1} k_2 $ and $\Run{L}{c_1}{s'_1}{h'_1} \xrightarrow{\sigma'_1} k'_2$
starting from states
$(s_1,h_1),(s'_1,h'_1) \models P_1 \star \History(tr_1) \star \invs{L_1}$
we have
    \begin{enumerate}
      \item  $\sigma_1 \cong_\ell \sigma'_1$ are aligned according to \cref{defn:aligned}, and
      \item the two configurations $k_2$ and $k'_2$ are \emph{matched}, in the sense that either both are $\textbf{stop}$ped with the same lock-set $L_2$ or both are $\textbf{run}$ning with identical commands~$c_2$ and lock-set~$L_2$, and
      \item if the execution step was an assumption~$\rho$, $A$ must contain a corresponding audit triple~$(P'_1,\tr_1,\rho)$ for current trace~$\tr_1$ and some assertion~$P'_1$ that follows from the current path condition~$P_1$, i.e., $P_1 \implies P'_1$ (weakening is allowed and necessary to validate the consequence rule), and
      \item if  $k_2 = \Stop{L_2}{s_2}{h_2}$
            and $k'_2 = \Stop{L_2}{s'_2}{h'_2}$
            then
            either the step was a violated assumption~$\rho$ with
            $s_1,s'_1 \not\models \rho$
            or the postcondition holds
            $(s_1,h_1),(s'_1,h'_1) \models Q \star \History(\tr_2) \star \invs{L_2}$
            for some $\tr_2$, and
      \item if  $k_2 = \Run{c_2}{L_2}{s_2}{h_2}$
            and $k'_2 = \Run{c_2}{L_2}{s'_2}{h'_2}$
            then
            either the step was a violated assumption~$\rho$ with
            $s_1,s'_1 \not\models \rho$
            or some intermediate assertion $P_2$ holds
            $(s_1,h_1),(s'_1,h'_1) \models P_2 \star \History(\tr_2) \star \invs{L_2}$
            for some $\tr_2$
            as well as recursively,
            the program is secure for the remaining~$n$ steps from
            that point on, i.e., $\secure{\ell}{n}{P_2,\tr_2,c_2,Q,\tr,A}$.
      \end{enumerate}
\end{itemize}
\end{definition}
So far, we have tacitly suppressed the semantic model underlying the
abstract predicate $\History(\tr)$.
It can be explained either by introducing a constant ghost location $\mathsf{tr}$
in heaps, so that $(s,h),(s',h') \models \History(\tr)$ iff
$h = [\mathsf{tr} \mapsto \sem{\tr}{s}]$
and
$h' = [\mathsf{tr} \mapsto \sem{\tr}{s'}]$
(recall that $\tr$ is an expression and cf. \citep{banerjee2008expressive,Schoepe_MS_20}),
or alternatively we can interpret $\tr$ in terms of yet another type of actions in the schedule
(which we have done in our Isabelle/HOL proofs).

\begin{lemma}
A valid proof using the rules
$\audit{\ell}{P_1 \star \History(\tr_1)}{c}{Q\star \History(tr)}{A}$,
implies that the program is secure for any number of steps,
i.e.,
$\forall n.\ \secure{\ell}{n}{P_1,\tr_1,c,Q,\tr,A}$.
\end{lemma}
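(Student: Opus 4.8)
The plan is to prove the lemma by induction on the derivation of the extended judgement $\audit{\ell}{P_1 \star \History(\tr_1)}{c}{Q \star \History(tr)}{A}$, establishing in each case that $\secure{\ell}{n}{P_1,\tr_1,c,Q,\tr,A}$ holds for \emph{every} $n$, by a nested induction on the step counter~$n$ (the case $n=0$ is immediate). Two trivial monotonicity facts about $\mathsf{secure}$, read directly off its definition, are used throughout: it is antitone in~$n$ (secure for $n{+}1$ steps implies secure for $n$), and monotone in the audit set~$A$ (enlarging~$A$ preserves security, since condition~(3) only demands membership of a triple). For the atomic commands the nested induction collapses: one enumerates the essentially unique first step $\Run{L_1}{c}{s_1}{h_1} \xrightarrow{\sigma_1} k_2$ and the matching step from the minor state $(s_1',h_1')$, where $(s_1,h_1),(s_1',h_1') \models P_1 \star \History(\tr_1) \star \invs{L_1}$, and checks conditions~(1)--(5) directly. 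The cases $\ASSIGN{x}{e}$, $\LOCK{l}$, $\UNLOCK{l}$ are the familiar CSL ones, with the locking rules moving $\inv{l}$ in or out of $\invs{L}$ and emitting only $\tau$ so that alignment~(1) is trivial. For $\OUTPUT{\el}{\ev}$, $\LOAD{x}{\ep}$, $\STORE{\ep}{\ev}$ the highlighted security side-conditions ($\ev\haslabel\el \star \el\haslabel\ell$, resp.\ $\ep\haslabel\ell$) are precisely what is needed to discharge alignment~(1) for the emitted $\Out{\el}{\ev}$, $\Load{p}$, $\Store{p}$ actions, and the points-to conjunct gives memory safety of \emph{both} runs (so the successors are matched and not $\Abort$). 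For $\ASSUME{\rho}$, rule \textsc{Assume} puts $(P_1,\tr_1,\rho)$ into~$A$, discharging~(3); the command stops and~(4) holds because either $s_1,s_1'\not\models_\ell\rho$ or the manifested postcondition $P_1 \star \rho \star \History(\tr_1)$ holds. For $\TRACE{e}$ the history grows to $\Concat{\tr_1}{\List e}$ (rule \textsc{Emit}) and~(4) holds with that trace.

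The conditional and while rules use their side-condition $P \implies \phi\haslabel\ell$ to conclude that the guard evaluates identically in both runs, so both successors take the same branch, remain matched, and the schedules stay aligned; the remainder follows from the derivation-level induction hypotheses for the branch bodies, with the loop invariant~$P$ re-established and the step index strictly decreased for $\WHILE{\phi}{c}$. The crux on the compositional side is sequential composition, for which I would first prove a \emph{sequencing lemma}: if $\secure{\ell}{n}{P,\tr,c_1,Q',\tr'',A_1}$, and $\secure{\ell}{n}{Q',\tr'',c_2,R,\tr',A_2}$ holds whenever $c_1$ stops in a state satisfying $Q' \star \History(\tr'')$, then $\secure{\ell}{n}{P,\tr,(c_1;c_2),R,\tr',A_1 \cup A_2}$. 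This is shown by induction on~$n$, reading off the operational rule for $c_1;c_2$: while $c_1$ is running, a step of the composite is a step of $c_1$ and the first hypothesis applies (preserving, via~(5), an intermediate assertion and recursive secureness); once $c_1$ stops, the configuration becomes $\Run{L}{c_2}{s}{h}$ and condition~(5) asks for exactly the $c_2$ premise, now with a smaller step budget (hence antitonicity in~$n$). Rule \textsc{Seq} then finishes the case using $A_1 \cup A_2 \subseteq A$ and monotonicity in~$A$. The consequence rule is sound because $\mathsf{secure}$ is monotone in~$Q$ and anti-monotone in~$P_1$ — strengthening the precondition is legitimate precisely because condition~(3) only requires \emph{some}~$P_1'$ with $P_1 \implies P_1'$ to label the audit triple. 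The frame rule uses the standard locality (safety-monotonicity and frame) property of the small-step semantics — extending both heaps by disjoint pieces satisfying~$F$ changes neither which steps are possible nor their action on the original footprint — together with rule \textsc{Frame}, which adjoins~$F$ to every recorded audit triple so condition~(3) is preserved under $P \star F$.

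Parallel composition is the step I expect to be the main obstacle, and I would discharge it following the classical Concurrent Separation Logic soundness argument (in the style of Brookes and of Vafeiadis~\citep{vafeiadismfps11}, already adapted in \SecCSL~\citep{Ernst_Murray_19}), here additionally threading schedule alignment and audit triples. From $(s_1,h_1),(s_1',h_1')\models (P_1 \star P_2)\star\invs{L_1}$ one splits the heaps of both runs along $P_1$, $P_2$ and the pool of lock invariants, uses the side-condition $\fv{c_i}\cap\mathrm{mod}(c_j)=\varnothing$ to rule out cross-thread interference on variables, and tracks how $\invs{L}$ migrates between threads on $\LOCK{l}$ / $\UNLOCK{l}$ steps (a step of $c_1 \parallel c_2$ that scheduled, say, the left thread emits $\Fst$ and then that thread's action). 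The $\Fst$/$\Snd$ action is always visible and aligns with itself, while alignment of the scheduled thread's action, matchedness of its successor, its audit obligation, and recursive secureness for the remaining $n$ steps all come from the component-derivation hypotheses applied to the appropriately framed sub-states; reassembling the two threads' states and audit sets as $A_1 \cup A_2$ re-establishes $\mathsf{secure}$ for $c_1 \parallel c_2$. Assembling all cases of the derivation induction yields $\forall n.\ \secure{\ell}{n}{P_1,\tr_1,c,Q,\tr,A}$, from which \cref{thm:guarantee} and \cref{thm:audit} follow by unfolding conditions~(1)--(5) along a concrete pair of executions.
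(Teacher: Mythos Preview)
Your proposal is correct and follows essentially the same approach as the paper: the paper's proof is a brief sketch stating exactly that one proceeds by induction on the derivation of $\audit{\ell}{P_1 \star \History(\tr_1)}{c}{Q\star \History(tr)}{A}$, with an inner induction on the step counter~$n$ for the structural rules (frame, consequence) and compound statements (if, while, sequential and parallel composition), each case being a separate lemma in the mechanization. You have simply fleshed out the details the paper omits, including the auxiliary monotonicity observations and the sequencing lemma, in a manner consistent with the underlying \SecCSL and Vafeiadis-style soundness arguments the paper cites.
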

\begin{proof}
By induction on the derivation of 
$\audit{\ell}{P_1 \star \History(\tr_1)}{c}{Q\star \History(tr)}{A}$.
Structural rules (frame, conseq) and compound statements (if, while, sequential and parallel composition)
need an inner induction on the number of steps~$n$.
In our mechanized development, each case is formulated as a separate lemma.
\end{proof}

\begin{lemma}[Secure, Lock-step runs]
    \label{lem:secure}
Assume for all~$n$, that $\secure{\ell}{n}{P_1,\tr_1,c,Q,\tr,A}$.
For a major run $\Run{L}{c}{s}{h} \xrightarrow{\sigma} k$
and a minor run $\Run{L}{c}{s'}{h'} \xrightarrow{\sigma'} k'$
with the \emph{same} program~$c$, lockset~$L$ and $|\sigma| = |\sigma'|$
such that $(s,h),(s',h') \models_\ell P_1 \star \History(\tr_1) \invs{L}$,
we have
\begin{itemize}
      \item the two configurations $k$ and $k'$ are \emph{matched}, in the sense that either both are $\textbf{stop}$ped with the same lock-set $L_2$ or both are $\textbf{run}$ning with identical commands~$c_2$ and lock-set~$L_2$,
\end{itemize}
      and one of the following is true:
\begin{itemize}
\item an assumption has failed at some step $m < |\sigma|$, i.e.,
predicate
      $\mathsf{assumption\text{-}failed}_\ell(m, \sigma, \sigma')$
      holds and the two prefix runs
      given as
      $\Run{L}{c}{s}{h} \xrightarrow{\sigma\cut{m+1}} k_{m+1}$
      and $\Run{L}{c}{s'}{h'} \xrightarrow{\sigma'\cut{m+1}} k'_{m+1}$
      are characterized as follows
      \begin{enumerate}
      \item the intermediate configurations $k_{m+1}$ and $k'_{m+1}$ are matched (cf. above), and
      \item the schedules are aligned up to and including that step, and
            $\sigma\cut{m+1} \cong_\ell \sigma'\cut{m+1}$, and
      \item there is an intermediate assertion~$P_k$ and trace expression~$\tr_k$
            so that $(P_k,\tr_k,\rho) \in A$,
            where $\sigma(k) = \sigma'(k) = \Assumption{\rho}$ is the failed assertion,
            and $P_k \star \History(\tr_k) \star \invs{L_k}$
            holds in the states of $k_{m+1}$ and $k'_{m+1}$ for the corresponding lock-set $L_k$.
      \end{enumerate}
\item no assumption has failed and the entire runs are aligned
      $\sigma \cong_\ell \sigma'$,
      where $k$ and $k'$ either both stopped and validate postcondition~$Q \star \History(\tr') \star \invs{L'}$
      for the some trace expression~$\tr'$ and respective lock-set~$L'$ of~$k$ and~$k'$,
      or they are both running with the same residual program~$c'$
      and similarly validate some intermediate assertion~$P'$ and trace expression~$\tr'$
      from which $c'$ is again secure for any number of steps,
      $\forall n.\ \secure{\ell}{n}{P',\tr',c',Q,\tr,A}$.
\end{itemize}
\end{lemma}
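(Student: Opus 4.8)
The plan is to prove \cref{lem:secure} by induction on the common number of steps $n = |\sigma| = |\sigma'|$ of the two runs, at each step peeling off the first transition and instantiating the hypothesis $\secure{\ell}{n}{P_1,\tr_1,c,Q,\tr,A}$ at the appropriate step count to push the bundle ``matched configuration $+$ shared invariant $\invs{L}$ $+$ current path condition $+$ current trace'' one step forward. For $n = 0$ the two configurations are the initial $\Run{L}{c}{s}{h}$ and $\Run{L}{c}{s'}{h'}$; they are trivially matched (same residual command, same lock set), the empty schedules are aligned in the sense of \cref{defn:aligned}, no assumption has failed, and the ``both running'' alternative holds with residual command~$c$, assertion~$P_1$ and trace~$\tr_1$, for which $\forall n.\,\secure{\ell}{n}{P_1,\tr_1,c,Q,\tr,A}$ is precisely the hypothesis.

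For the inductive step, write the major run as a first transition $\Run{L}{c}{s}{h} \xrightarrow{\List{a}} k_1$ followed by an $n$-step run $\Steps{k_1}{\hat{\sigma}}{k}$, and analogously the minor run via $\List{a'}$, $k'_1$, $\hat{\sigma}'$. Applying the hypothesis at step count $n{+}1$ to this pair of first steps immediately yields $\List{a} \cong_\ell \List{a'}$ (i.e.\ $a \cong_\ell a'$), that $k_1$ and $k'_1$ are matched, and --- if $a$ is an assumption action --- an audit triple $(P'_1,\tr_1,\rho)\in A$ with $P_1 \implies P'_1$. Now case-split on the outcome of the first step. \emph{Case (i): the first step is a violated assumption}, $a = \Assumption{s_1~\rho}$, $a' = \Assumption{s'_1~\rho}$, $s_1,s'_1 \not\models_\ell \rho$. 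Then the ``assumption failed at some step $j < |\sigma|$'' alternative holds with $j = 0$: the length-$1$ prefixes end in the matched $k_1,k'_1$; their schedules $\List{a} \cong_\ell \List{a'}$ are aligned; and because an assumption step leaves store and heap unchanged and $(s,h),(s',h') \models P_1 \star \History(\tr_1) \star \invs{L}$, weakening along $P_1 \implies P'_1$ gives $P'_1 \star \History(\tr_1) \star \invs{L}$ at $k_1,k'_1$, discharging the audit-triple condition. \emph{Case (ii): no violation and $k_1,k'_1$ are both stopped} with lock set~$L_2$. As terminated configurations have no successors, $\hat{\sigma} = \hat{\sigma}' = \Nil$, so $n = 0$, $k = k_1$, $k' = k'_1$; the unfolding of $\secure{\ell}{1}{P_1,\dots}$ supplies $Q \star \History(\tr_2) \star \invs{L_2}$ at these states, and $\sigma = \List{a} \cong_\ell \List{a'} = \sigma'$, giving the ``no failure, both stopped'' alternative.

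\emph{Case (iii): no violation and $k_1,k'_1$ are both running} with residual command~$c_2$ and lock set~$L_2$. The unfolding provides an intermediate assertion~$P_2$ and trace~$\tr_2$ with $P_2 \star \History(\tr_2) \star \invs{L_2}$ holding at the post-states of $k_1,k'_1$, plus security of the continuation for the remaining steps. Applying the induction hypothesis to the two $n$-step continuation runs $\Steps{k_1}{\hat{\sigma}}{k}$ and $\Steps{k'_1}{\hat{\sigma}'}{k'}$ (with precondition data $P_2,\tr_2,c_2,L_2$) then finishes: if the IH reports a failure at step $j < n$, relay it as a failure at step $j{+}1 < n{+}1$ of the full runs --- the prefix endpoints stay matched, the extended prefix schedules are aligned using $a \cong_\ell a'$, and the audit-triple/state condition is inherited unchanged; if the IH reports no failure, prepend $\List{a}$ to the aligned continuation schedules to obtain $\sigma \cong_\ell \sigma'$, note $k,k'$ are matched, and relay the ``both stopped''~/~``both running'' sub-conclusion verbatim (its post-state assertion, postcondition $Q \star \History(\tr') \star \invs{L'}$ or intermediate $P' \star \History(\tr') \star \invs{L'}$, and $\forall n.\,\secure{\ell}{n}{P',\tr',c',Q,\tr,A}$ in the running sub-case).

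The main obstacle is the mismatch between the step index in $\mathsf{secure}$ and the universally quantified continuation needed in case (iii): a single unfolding of $\secure{\ell}{n{+}1}{\dots}$ delivers only $\secure{\ell}{n}{P_2,\dots}$, whereas the induction hypothesis requires $\forall m.\,\secure{\ell}{m}{P_2,\tr_2,c_2,Q,\tr,A}$. This is bridged by observing that $\mathsf{secure}$ is antitone in its step index and that the intermediate assertions $P_2,\tr_2$ arising here are exactly those threaded through the underlying derivation of $\audit{\ell}{P_1 \star \History(\tr_1)}{c}{Q \star \History(\tr')}{A}$ (equivalently, by reading $\mathsf{secure}$ coinductively, so that ``$\forall n.\,\secure{\ell}{n}{\dots}$'' unfolds in one step to a recursive occurrence of the same predicate, side-stepping the quantifier swap); in our Isabelle/HOL development each of the structural cases above is an individual lemma and this step-count bookkeeping is where most of the work sits. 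A secondary, routine obstacle is tracking lock sets and their invariants $\invs{L}$ across $\LOCK{l}$/$\UNLOCK{l}$ steps --- where ownership of $\inv{l}$ migrates between the frame $\invs{L}$ and the local assertion --- and maintaining the matched-configuration property; both are already taken care of inside the unfolding of $\mathsf{secure}$ and need no further argument here.
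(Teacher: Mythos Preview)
Your proof follows essentially the same approach as the paper's: induction on the (common) number of steps of the two executions, unfolding the inductive $\mathsf{secure}$ predicate alongside. The paper gives only a one-line sketch; your detailed case analysis and your explicit identification of the step-index/quantifier-swap subtlety (which the paper leaves implicit in its mechanization) are both accurate.
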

\begin{proof}[Sketch]
This lemma is proved by induction on the (locked) steps of the two executions,
unfolding the inductive security property alongside.
\end{proof}

\begin{lemma}
    \label{lem:audit}
\changed{Given}{Consider} a verified program
$\audit{\ell}{P_1 \star \History(\tr_1)}{c}{Q \star \History(\tr')}{A}$,
i.e., $\forall n.\ \secure{\ell}{n}{P_1,\tr_1,c,Q,\tr,A}$,
a policy~$D$ that has been formally audited (\cref{defn:audit})
and
a pair of a major run
$\Steps{\Run{L_1}{c}{s_1}{h_1}}{\sigma_1}{k_1}$
and minor run
$\Steps{\Run{L_1}{c}{s_1'}{h_1'}}{\sigma'_1}{k'_1}$
from the precondition $(s_1,h_1),(s'_1,h'_1) \models P_1 \star \invs{L_1}$.
\changed{Each}{Then each} assumption failure at some $m < |\sigma| = |\sigma'|$
with
$\sigma(m) = \Assumption{s_m~\rho}$,
$\sigma'(m) = \Assumption{s'_m~\rho}$
is paired with an entry with $(P_m,\tr_m,\rho) \in A$ and
intermediate states
$L_m,s_m,h_m,s'_m,h'_m$
with
$(s_m,h_m),(s'_m,h'_m) \models P_m \star \invs{L_m}$,
such that the specified traces match the schedule:
$\sem{\tr_m}{s_m} = \Concat{\sem{\tr_1}{s_1}}{\trace{\sigma\cut{m}}}$
and
$\sem{\tr_m}{s'_m} = \Concat{\sem{\tr_1}{s'_1}}{\trace{\sigma'\cut{m}}}$.
\end{lemma}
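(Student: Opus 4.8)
The plan is to derive \cref{lem:audit} by combining the lock-step characterization of \cref{lem:secure} with one extra semantic invariant: that the trace expression carried by the history predicate always evaluates, in the current store, to the initial trace extended by exactly the $\Trace{\_}$ events emitted so far. Applying \cref{lem:secure} to the given major/minor runs already delivers, for an assumption failure at step~$m$, a matched pair of intermediate configurations $k_{m+1}$, $k'_{m+1}$, aligned schedule prefixes $\sigma\cut{m+1} \cong_\ell \sigma'\cut{m+1}$, an audit triple $(P_m,\tr_m,\rho)\in A$, and stores/heaps $s_m,h_m$ and $s'_m,h'_m$ (those of $k_{m+1}$ and $k'_{m+1}$) with $(s_m,h_m),(s'_m,h'_m)\models P_m \star \History(\tr_m)\star\invs{L_m}$. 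All that is left to establish, relative to \cref{lem:audit}, are the identities $\sem{\tr_m}{s_m} = \Concat{\sem{\tr_1}{s_1}}{\trace{\sigma\cut{m}}}$ and $\sem{\tr_m}{s'_m} = \Concat{\sem{\tr_1}{s'_1}}{\trace{\sigma'\cut{m}}}$.

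First I would prove the trace invariant: for any reachable prefix $\Steps{\Run{L_1}{c}{s_1}{h_1}}{\sigma\cut{k}}{k_k}$ whose store is $s_k$ and for which the unfolding of $\secure{\ell}{n}{P_1,\tr_1,c,Q,\tr,A}$ associates intermediate trace expression $\tr_k$ with $\History(\tr_k)$ asserted at $k_k$, one has $\sem{\tr_k}{s_k} = \Concat{\sem{\tr_1}{s_1}}{\trace{\sigma\cut{k}}}$. This is proved by induction on~$k$, run in lock-step with the unfolding of the $\mathsf{secure}$ predicate; it is cleanest to fold this into the very same induction that proves \cref{lem:secure}. The base case $k=0$ is immediate, since $\trace{\Nil}=\Nil$ and $\tr_0 = \tr_1$. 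For the inductive step, the only command whose transition changes the history-carried expression is $\TRACE{e}$, whose rule \textsc{Emit} rewrites $\tr_k$ to $\Concat{\tr_k}{\List{e}}$ while the operational semantics emits the action $\List{\Trace{\sem{e}{s_k}}}$; hence both sides of the equation grow by the single event $\sem{e}{s_k}$ and it is preserved. For every other command, the emitted action (if any) is not a $\Trace{\_}$ action, so $\trace{\_}$ of the prefix is unchanged, and the $\mathsf{secure}$ predicate re-establishes $\History(\_)$ in the successor with a trace expression whose value in the successor store equals $\sem{\tr_k}{s_k}$; so the equation again persists. If $\History(\_)$ is realized by a dedicated kind of schedule action rather than a ghost heap cell, as in our Isabelle/HOL development, the argument is identical, reading $\trace{\_}$ off those actions.

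Then I would combine the two. Being in the first bullet of \cref{lem:secure} gives the data recalled above at $k_{m+1}$, $k'_{m+1}$. Since an $\ASSUME{\rho}$ step leaves both store and heap untouched, the store $s_m$ at $k_{m+1}$ coincides with the store just before the assumption at position~$m$; moreover position~$m$ of the schedule carries an assumption action, which is not a $\Trace{\_}$ action, so $\trace{\sigma\cut{m+1}}=\trace{\sigma\cut{m}}$. Instantiating the trace invariant at $k=m+1$ therefore yields $\sem{\tr_m}{s_m} = \Concat{\sem{\tr_1}{s_1}}{\trace{\sigma\cut{m+1}}} = \Concat{\sem{\tr_1}{s_1}}{\trace{\sigma\cut{m}}}$, and symmetrically for the minor run. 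Together with $(P_m,\tr_m,\rho)\in A$ and $(s_m,h_m),(s'_m,h'_m)\models P_m\star\invs{L_m}$ this is exactly the conclusion of \cref{lem:audit}, which then discharges the obligations of \cref{defn:audit} needed for \cref{thm:audit}.

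The main obstacle I expect is not the combination step but getting the trace bookkeeping exactly right. The $\mathsf{secure}$ predicate only asserts $\History(\tr_k)$ for \emph{some} intermediate trace expression, so one must argue that this existential witness is threaded consistently across steps and agrees value-wise with $\trace{\_}$; this requires care for assignments and lock transfers that may rebind or alias the variable holding the trace, and for parallel composition, where the trace expression must flow both through the frame and through the active sub-derivation. These are precisely the places where doing the trace invariant simultaneously with the induction of \cref{lem:secure}, rather than as a separate afterthought, pays off.
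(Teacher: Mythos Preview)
Your proposal is correct and essentially matches the paper's approach. The paper's proof is only a one-line sketch (``by induction on the (locked) steps of the two executions, unfolding the inductive security property alongside''), and your plan---running the trace-bookkeeping invariant in the same induction that establishes \cref{lem:secure}---is exactly that. Your isolated trace invariant (that $\sem{\tr_k}{s_k} = \Concat{\sem{\tr_1}{s_1}}{\trace{\sigma\cut{k}}}$ for every reachable prefix) and your observation that the $\ASSUME{\rho}$ step neither changes the store nor contributes a $\Trace{\_}$ action are precisely the ingredients the terse sketch leaves implicit; the paper does not spell them out, but they are what the induction must carry.
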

\begin{proof}[Sketch]
This lemma is proved by induction on the (locked) steps of the two executions,
unfolding the inductive security property alongside.
\end{proof}


\changed{
We have mechanized this proof in Isabelle/HOL
with the help of an inductive predicate $\secure{\ell}{n}{P,c,Q}$
which states that the program is secure for~$n$ steps,
based on the existing formalization of \citet{Ernst_Murray_19}.
The first bullet-point of \cref{lem:secure}
and the conclusion of alignment in the second one is added,
and we inject assumptions made by steps
in the definition of $\secure{\ell}{n}{P,c,Q}$.
This lemma therefore reflects consequences on the newly introduced notions,
i.e., much more detailed information about the respective executions
that is now present in the schedules.
}{}

%
%
%
%

\end{added}

\renewcommand{\thetheorem}{\ref{thm:guarantee}}
\begin{theorem}[Policy-agnostic security guarantee]%
If $\prove{\ell}{P}{c}{Q}$ then for a major run
$\Steps{\Run{L}{c}{s}{h}}{\sigma_1}{k_1}$
the knowledge gain from one additional step $\Step{k_1}{\sigma_2}{k_2}$,
expressed as the difference in uncertainty,
is bounded by the release condition:
\begin{align*}
& \uncertainty{\ell}{P}{\sigma_1}{c}{L}{s}{h} \setminus
  \uncertainty{\ell}{P}{\Concat{\sigma_1}{\sigma_2}}{c}{L}{s}{h} \\
{} \subseteq {} & \assumedrelease{\ell}{P}{\sigma_1}{c}{L}{s}{h}
\end{align*}
\end{theorem}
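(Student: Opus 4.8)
The plan is to reduce \cref{thm:guarantee} to the intermediate characterization \cref{lem:secure}, using the fact that $\prove{\ell}{P}{c}{Q}$ implies, through the soundness lemma connecting derivations to the inductive predicate, that $\forall n.\ \secure{\ell}{n}{P,\List{},c,Q,\tr,A}$ for a suitable (possibly empty) audit set $A$. First I would unfold the set difference on the left-hand side: an initial state $(s',h')$ lies in it exactly when it is consistent with the major run's schedule $\sigma_1$ (i.e., there is a minor run with schedule $\sigma'$ satisfying $(s,h),(s',h')\models P \star \invs{L}$, $\Steps{\Run{L}{c}{s'}{h'}}{\sigma'}{k'}$, and $\sigma_1 \approx_\ell \sigma'$) but \emph{not} consistent with the extended schedule $\Concat{\sigma_1}{\sigma_2}$. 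I would then fix such an $(s',h')$ together with its witnessing minor run of length $|\sigma_1|$ and argue that it must in fact lie in $\assumedrelease{\ell}{P}{\sigma_1}{c}{L}{s}{h}$, i.e., that there is some position $n$ at which an assumption failure occurs.

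The key step is to apply \cref{lem:secure} to the major run $\Steps{\Run{L}{c}{s}{h}}{\sigma_1}{k_1}$ and the minor run $\Steps{\Run{L}{c}{s'}{h'}}{\sigma'}{k'}$ of the same length. The lemma gives a dichotomy: either (a) some assumption fails at a step $m < |\sigma_1|$, or (b) no assumption fails, $\sigma_1 \cong_\ell \sigma'$ (hence $\sigma_1 \approx_\ell \sigma'$), the configurations $k_1$ and $k'$ are matched, and $k'$ can be resumed from an intermediate assertion for which $\secure{\ell}{n}{\ldots}$ holds for all $n$. In case (a) we are immediately done: $\mathsf{assumption\text{-}failed}_\ell(m,\sigma_1,\sigma')$ holds, which is exactly the witness required to place $(s',h')$ into $\assumedrelease{\ell}{P}{\sigma_1}{c}{L}{s}{h}$ (here we also use that $\sigma_1 \cong_\ell \sigma'\cut{m+1}$ from the lemma, but since the minor run was chosen consistent with $\sigma_1$ we already have $\sigma_1 \approx_\ell \sigma'$ everywhere it matters). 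So the real work is to rule out case (b): I would show that if no assumption fails, then the minor run can be extended by one more step $\Step{k'}{\sigma'_2}{k'_2}$ with $\sigma_2 \cong_\ell \sigma'_2$, hence $\Concat{\sigma_1}{\sigma_2} \approx_\ell \Concat{\sigma'}{\sigma'_2}$, which would place $(s',h')$ \emph{inside} $\uncertainty{\ell}{P}{\Concat{\sigma_1}{\sigma_2}}{c}{L}{s}{h}$, contradicting our choice of $(s',h')$ from the set difference. This extension step is justified by unfolding $\secure{\ell}{1}{P',\tr',c',Q,\tr,A}$ at the matched configuration $k_1 = \Run{L'}{c'}{s_2}{h_2}$, $k' = \Run{L'}{c'}{s'_2}{h'_2}$ (using that $k_1$ actually takes a step to $k_2$ via $\sigma_2$, so $c'$ is not a stopped configuration), which yields a minor step with an aligned schedule; and the only way this aligned step could \emph{fail} to be alignment is if it were itself an assumption failure, which we then fold back into case (a). The case where $k_1$ is already a $\textbf{stop}$ configuration needs a brief separate remark, since then $\Step{k_1}{\sigma_2}{k_2}$ does not exist, so the set difference is vacuously empty.

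I expect the main obstacle to be the bookkeeping around the \emph{length} of the minor run and the interaction between $\approx_\ell$ and $\cong_\ell$: \cref{lem:secure} is stated for minor runs of the same length as the major run and delivers the strictly stronger $\cong_\ell$, whereas $\uncertainty{}$ and $\assumedrelease{}$ are phrased with $\approx_\ell$. The delicate point is that a minor run witnessing membership in $\uncertainty{\ell}{P}{\sigma_1}{\ldots}$ via $\sigma_1 \approx_\ell \sigma'$ need not a priori satisfy $\sigma_1 \cong_\ell \sigma'$ — but because the proof rules produce only matched executions, any minor run reaching length $|\sigma_1|$ from a state satisfying the precondition is in fact forced (by \cref{lem:secure} applied to all shorter prefixes) to stay aligned up to the first assumption failure, so the two notions coincide on the relevant prefixes. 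Making this precise — essentially observing that $\approx_\ell$-consistency of a run that passes through no failed assumption upgrades to $\cong_\ell$-consistency — is the one spot that needs care; the rest is routine unfolding of the definitions of $\uncertainty{}$, $\assumedrelease{}$, and $\mathsf{assumption\text{-}failed}$.
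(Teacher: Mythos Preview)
Your plan is essentially the paper's own proof: unfold the set difference to get a minor run that is $\approx_\ell$-consistent with $\sigma_1$ but cannot be extended consistently with $\sigma_2$, apply \cref{lem:secure} to obtain the dichotomy (failed assumption vs.\ fully aligned and matched), discharge the first case immediately as a witness for $\mathsf{assumed\text{-}release}$, and in the second case use left-totality of the semantics together with one more unfolding of $\mathsf{secure}$ to produce an aligned extension step, contradicting exclusion from the longer uncertainty set. Your remark about ``folding back into case~(a)'' if the extension step is itself an assumption is unnecessary---alignment $\Assumption{s~\rho} \cong_\ell \Assumption{s'~\rho}$ holds unconditionally, so the one-step unfolding already gives $\sigma_2 \cong_\ell \sigma'_2$ and hence $\sigma_2 \approx_\ell \sigma'_2$ outright---but this does no harm; likewise your worry about upgrading $\approx_\ell$ to $\cong_\ell$ is exactly what \cref{lem:secure} delivers from the mere equal-length hypothesis, so no separate argument is needed there.
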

\begin{proof}[Sketch] 
Unfolding the definitions, we obtain a minor run in the set difference
$\Steps{\Run{L}{c}{s'}{h'}}{\sigma'_1}{k'_1}$
that is still uncertain, i.e., with $\sigma_1 \simeq_\ell \sigma'_1$,
but none of its extensions are.
Considering the two cases from \cref{lem:secure}, noting that
$k_1 = \Run{L}{c_1}{s_1}{h_1}$ must be running
and $\prove{\ell}{P_1}{c_1}{Q}$ for some $P_1$.
\begin{itemize}
\item If this pair of runs already contains a failed assumption,
      then it witnesses the release condition, even if the attacker has not been able to observe any consequence of that fact yet.
\item Otherwise, since $k_1$ produces another step
      we have a matching $\Step{k'_1}{\sigma_2}{k'_2}$
      (both $k_1$ and $k'_1$ are running and the small-step semantics is left-total),
      and
      this extension leaks information, i.e., $\sigma_2 \not\simeq_\ell \sigma'_2$.
      Applying \cref{lem:secure} again from $P_1$ for just that step produces a contradiction:
      because we have the stronger condition $\sigma_2 \cong_\ell \sigma'_2$
      from assumption steps (which are invisible) \emph{and} for regular steps
      (which are proven secure).
\end{itemize}
\end{proof}

\renewcommand{\thetheorem}{\ref{thm:audit}}
\begin{theorem}[Policy-specific security guarantee]
For a verified program
$\audit{\ell}{P \star \History(\List{})}{c}{Q \star \History(\tr')}{A}$
and a policy~$D$ formally audited according to \cref{defn:audit}
for each major run
$\Steps{\Run{L}{c}{s}{h}}{\sigma_1}{k_1}$
with final step
$\Step{k_1}{\sigma_2}{k_2}$:
\begin{align*}
                & \assumedrelease{\ell}{P}{\sigma}{c}{L}{s}{h} \\
{} \subseteq {} & \policyrelease{\ell}{\D}{P}{\sigma}{c}{L}{s}{h}
\end{align*}
\end{theorem}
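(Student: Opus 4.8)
The plan is to show that the very witness certifying membership in $\assumedrelease{\ell}{P}{\sigma}{c}{L}{s}{h}$ also certifies membership in $\policyrelease{\ell}{\D}{P}{\sigma}{c}{L}{s}{h}$. Comparing the two definitions: both quantify over a minor schedule $\sigma'$ and successor $k'$ with $\Steps{\Run{L}{c}{s'}{h'}}{\sigma'}{k'}$ and $\sigma \approx_\ell \sigma'$; the only difference is that one asks for an assumption failure at some index~$n$ while the other asks for $\mathsf{policy\text{-}excludes}_\ell(\D,n,\sigma,\sigma')$ at some index, and $\policyrelease$ moreover places no constraint on the initial states. Hence it suffices to prove: if $(s,h),(s',h') \models P \star \History(\List{}) \star \invs{L}$ and $\mathsf{assumption\text{-}failed}_\ell(n,\sigma,\sigma')$, then $\mathsf{policy\text{-}excludes}_\ell(\D,n,\sigma,\sigma')$ at the same~$n$. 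So I fix such $(s',h')$, $\sigma'$, $k'$, $n$, with $\sigma(n) = \Assumption{s_n~\rho}$, $\sigma'(n) = \Assumption{s'_n~\rho}$ and $s_n,s'_n \not\models_\ell \rho$; note $n < |\sigma| = |\sigma'|$ since $\sigma \approx_\ell \sigma'$.

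Next I apply \cref{lem:audit} to the pair of prefix runs underlying $\sigma$ and $\sigma'$: its hypotheses hold because the initial states satisfy $P \star \invs{L}$, the initial history is $\History(\List{})$, and $|\sigma| = |\sigma'|$. This produces an audit triple $(P_n,\tr_n,\rho) \in A$ together with intermediate heaps $h_n,h'_n$ and lock-set $L_n$ such that the stores $s_n,s'_n$ recorded in the two $\Assumption$ actions satisfy $(s_n,h_n),(s'_n,h'_n) \models P_n \star \invs{L_n}$, and such that the triple's trace expression evaluates to the schedule-induced traces, $\sem{\tr_n}{s_n} = \trace{\sigma\cut{n}}$ and $\sem{\tr_n}{s'_n} = \trace{\sigma'\cut{n}}$ (the leading $\sem{\List{}}{\cdot}$ contributed by the initial history being empty). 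Thus $\tr_n$ names exactly the trace accumulated up to the failed assumption in each run.

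Now I invoke the hypothesis that $\D$ has been correctly audited (\cref{defn:audit}), which supplies the valid entailments $P_n \implies \Dpre(\tr_n)$ and $P_n \star \Dpost(\tr_n) \implies \rho$. From the first entailment, applied to the states recorded above, $\Dpre(\tr_n)$ holds at $(s_n,h_n)$ and $(s'_n,h'_n)$; extracting its pure content and substituting the trace identities gives precisely $\sigma\cut{n}\models\Dpre$ and $\sigma'\cut{n}\models\Dpre$. For the remaining conjunct of $\mathsf{policy\text{-}excludes}$, suppose toward a contradiction that $\sigma\cut{n},\sigma'\cut{n}\models\Dpost$, i.e.\ $s_n,s'_n \models \Dpost(\tr_n)$; combining with $(s_n,h_n),(s'_n,h'_n)\models P_n$ and the second entailment yields $s_n,s'_n \models_\ell \rho$, contradicting the assumption failure $s_n,s'_n \not\models_\ell \rho$. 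Hence $\sigma\cut{n},\sigma'\cut{n} \not\models \Dpost$, so $\mathsf{policy\text{-}excludes}_\ell(\D,n,\sigma,\sigma')$ holds, and the reused pair $\sigma',k'$ witnesses $(s',h') \in \policyrelease{\ell}{\D}{P}{\sigma}{c}{L}{s}{h}$ — the claimed inclusion. Composing this with \cref{thm:guarantee} additionally yields the displayed uncertainty bound.

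The step I expect to carry essentially all the weight is \cref{lem:audit}, which I would take as already established: it is what bridges the step-indexed $\mathsf{secure}$ predicate — unwound linearly over execution steps, into whose intermediate path conditions assumptions have been injected — with the syntax-directed judgement $\audit{\ell}{\cdot}{c}{\cdot}{A}$ built by induction on $c$; in particular it is responsible for the two facts used above that are not visible in the calculus alone, namely that the store recorded in an $\Assumption$ action is a state satisfying the path condition $P_n$ of the matching audit triple, and that the triple's symbolic trace $\tr_n$ evaluates to $\trace{\sigma\cut{n}}$ along any execution. A subsidiary, more cosmetic nuisance is making fully precise the coexistence of the two readings of $\Dpre$ and $\Dpost$ — as separation-logic assertions that thread $\History(\tr)$ through the proof versus as formulas of the single trace variable $\mathtt{tr}$ evaluated at $\trace{\sigma}$ — which is settled once and for all by interpreting $\History$ via a dedicated ghost heap cell (or, as in the Isabelle/HOL development, via an additional kind of schedule action).
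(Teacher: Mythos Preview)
Your proposal is correct and follows essentially the same route as the paper's proof sketch: fix a minor run witnessing $\mathsf{assumed\text{-}release}$, invoke \cref{lem:audit} to obtain the audit triple $(P_n,\tr_n,\rho)\in A$ whose trace expression evaluates to $\trace{\sigma\cut{n}}$ (resp.\ $\trace{\sigma'\cut{n}}$), then use the two entailments from \cref{defn:audit}---the first to establish $\Dpre$ on both prefixes, the contrapositive of the second to falsify $\Dpost$---yielding $\mathsf{policy\text{-}excludes}$. The only cosmetic difference is that the paper's sketch additionally cites \cref{lem:secure} to record that the schedules are aligned ($\cong_\ell$) up to the failure point, which it notes is ``critical for some side-conditions''; you fold this into your appeal to \cref{lem:audit}, which is fine given how that lemma is stated.
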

\begin{proof}[Sketch]
Fix a minor run with schedule~$\sigma'$ from $\mathsf{assumed\text{-}release}$
that has an assumption failure at step
with respect to the major run.
By \cref{lem:secure} this occurs at some point~$n$
up to which ${\sigma_1}\cut{n} \cong {\sigma'_1}\cut{n}$
which is critical for some side-conditions.
By \cref{lem:audit} for $\tr_1 = \List{}$, $P_1 = P$,
and the runs up to step~$n$ we obtain an corresponding audit triple
$(P_2,\tr',\rho)$
for which \cref{defn:audit}
guarantees $\Dpre$ is implied by that $P_2$,
but the failed assumption~$\rho$ falsifies $\Dpost$
(contraposition of the second implication of the audit),
and therefore \cref{defn:policy-excludes} is satisfied.
\end{proof}

\begin{added}
\section{Program Semantics}\label{app:sem}

The single-step operational semantics is defined in
\cref{fig:sem-cmd}; multiple steps of execution $\Steps{k}{\sigma}{k'}$ is
defined inductively below. The
assertion semantics are defined in \cref{fig:sem-assert}.

\begin{figure*}
  \begin{added}
\begin{mathpar}
\infer{s' = s(x := \sem{e}{s})}
      {\Step{\Run{x := e}{L}{s}{h}}{\List{\Tick}}{\Stop{L}{s'}{h}}}
      
\infer{\sem{e}{s} \notin \dom{h}}
      {\Step{\Run{\LOAD{x}{e}}{L}{s}{h}}{\List{\Load{\sem{e}{s}}}}{\Abort}}
      
\infer{\sem{e}{s} \in \dom{h} \and s' = s(x := h(\sem{e}{s}))}
      {\Step{\Run{\LOAD{x}{e}}{L}{s}{h}}{\List{\Load{\sem{e}{s}}}}{\Stop{L}{s'}{h}}}

\infer{\sem{e_1}{s} \notin \dom{h}}
      {\Step{\Run{\STORE{e_1}{e_2}}{L}{s}{h}}{\List{\Store{\sem{e_1}{s}}}}{\Abort}}
      
\infer{\sem{e_1}{s} \in \dom{h} \and h' = h(\sem{e_1}{s} \mapsto \sem{e_2}{s})}
      {\Step{\Run{\STORE{e_1}{e_2}}{L}{s}{h}}{\List{\Store{\sem{e_1}{s}}}}{\Stop{L}{s}{h'}}}
      
\infer{l \in L \and L' = L \setminus \{l\}}
      {\Step{\Run{\LOCK{l}}{L}{s}{h}}{\List{\Tick}}{\Stop{L'}{s}{h}}}
      
\infer{l \notin L \and L' = L \union \{l\}}
      {\Step{\Run{\UNLOCK{l}}{L}{s}{h}}{\List{\Tick}}{\Stop{L'}{s}{h}}}
      
\infer{\Step{\Run{c_1}{L}{s}{h}}{\sigma}{\Abort}}
      {\Step{\Run{c_1; c_2}{L}{s}{h}}{\sigma}{\Abort}}
      
\infer{\Step{\Run{c_1}{L}{s}{h}}{\sigma}{\Stop{L'}{s'}{h'}}}
      {\Step{\Run{c_1; c_2}{L}{s}{h}}{\sigma}{\Run{c_2}{L'}{s'}{h'}}}
      
\infer{\Step{\Run{c_1}{L}{s}{h}}{\sigma}{\Run{c'_1}{L'}{s'}{h'}}}
      {\Step{\Run{c_1;c_2}{L}{s}{h}}{\sigma}{\Run{c'_1;c_2}{L'}{s'}{h'}}}
          
\infer{\Step{\Run{c_1}{L}{s}{h}}{\sigma}{\Abort}}
      {\Step{\Run{c_1 \parallel c_2}{L}{s}{h}}{\Concat{\List{\Fst}}{\sigma}}{\Abort}}
      
\infer{\Step{\Run{c_1}{L}{s}{h}}{\sigma}{\Stop{L'}{s'}{h'}}}
      {\Step{\Run{c_1 \parallel c_2}{L}{s}{h}}{\Concat{\List{\Fst}}{\sigma}}{\Run{c_2}{L'}{s'}{h'}}}
      
\infer{\Step{\Run{c_1}{L}{s}{h}}{\sigma}{\Run{c'_1}{L'}{s'}{h'}}}
      {\Step{\Run{c_1 \parallel c_2}{L}{s}{h}}{\Concat{\List{\Fst}}{\sigma}}{\Run{c'_1 \parallel c_2}{L'}{s'}{h'}}}

\infer{\text{if } s \vDash e \text{ then } c' = c_1 \text{ else } c' = c_2}
      {\Step{\Run{\ITE{e}{c_1}{c_2}}{L}{s}{h}}{\List{\Tick}}{\Run{c'}{L}{s}{h}}}
      
\infer{s \not\vDash e}
      {\Step{\Run{\WHILE{e}{c}}{L}{s}{h}}{\List{\Tick}}{\Stop{L}{s}{h}}}
      
\infer{s \vDash e}
      {\Step{\Run{\WHILE{e}{c}}{L}{s}{h}}{\List{\Tick}}{\Run{c; \WHILE{e}{c}}{L}{s}{h}}}
      
\infer{\ }{\Step{\Run{\SKIP}{L}{s}{h}}{\List{\Tick}}{\Stop{L}{s}{h}}}

\infer{\ }{\Step{\Run{\ASSUME{\rho}}{L}{s}{h}}{\List{\Assumption{\rho}}}{\Stop{L}{s}{h}}}

\infer{\ }{\Step{\Run{\OUTPUT{\el}{\ev}}{L}{s}{h}}{\List{\Out{\sem{\el}{s}}{\sem{\ev}{s}}}}{\Stop{L}{s}{h}}}

\infer{\ }{\Run{\TRACE{e}}{L}{s}{h}
  \xrightarrow{~\List{\Trace{\sem{e}{s}}}~}
    \Stop{L}{s}{h}}
\end{mathpar}
\end{added}
  \caption{\changed{}{Small-step operational semantics.\label{fig:sem-cmd} Symmetric parallel rules in which $c_2$ is scheduled
producing the event $\Snd$ have been omitted in the interests of brevity. We write $s \vDash e$ when evaluating expression~$e$ in state~$s$ and casting the resulting value to a boolean yields the value $\trueconst$; we write $s \not\vDash e$ otherwise.}
}
\end{figure*}

\begin{mathpar}
  \infer{\ }{\Steps{k}{\Nil}{k}}
  
  \infer{\Step{k}{\sigma_1}{k_1} \and \Steps{k_1}{\sigma_2}{k_2}}{\Steps{k}{(\Concat{\sigma_1}{\sigma_2})}{k_2}}
\end{mathpar}

\end{added}

\fi 

\end{document}
